\newtheorem{theorem}{Theorem}[section]
\newtheorem{corollary}[theorem]{Corollary}
\newtheorem{lemma}[theorem]{Lemma}
\newtheorem{definition}[theorem]{Definition}
\newtheorem{example}[theorem]{Example}
\newtheorem{remark}[theorem]{Remark}
\numberwithin{equation}{section}
\def\Z{{\Bbb Z}} 
\def\R{{\mathscr R}} 
\def\ol{\overline} \def\wh{\widehat}
\begin{document}
\title{Constacyclic Codes over Commutative Finite Principal Ideal Rings}
\author{Yun Fan\\
{\small School of Mathematics and Statistics}\\
{\small Central China Normal University, Wuhan 430079, China}}
\date{}%\date{Nov. 2020}
\maketitle

\insert\footins{\footnotesize{\it Email address}:
yfan@ccnu.edu.cn (Yun Fan).}

\begin{abstract}
For any constacyclic code over a finite commutative chain ring
of length coprime to the characteristic of the ring,
we construct explicitly generator polynomials and check polynomials, and 
exhibit a BCH bound for such constacyclic codes.
As a consequence, such constacyclic codes are principal. 
Further, we get a necessary and sufficient condition that the cyclic codes 
over a finite commutative principal ideal ring are all principal. 
This condition is still sufficient
for constacyclic codes over such rings being principal.

\medskip
{\bf Key words}: Finite chain ring; principal ideal ring;
constacyclic code; generator polynomial; check polynomial;
BCH bound. 
\end{abstract}

\section{Introduction}
Any ring in this paper is a commutative ring with identity.
An ideal of a ring is said to be {\em principal} if it is generated by one element.
A ring is called a {\em principal ideal ring}, abbreviated by PIR, 
if its any ideal is principal. 

For a finite ring~$R$,  $\R=R[X]/\langle X^n-\lambda\rangle$ denotes the
quotient ring of the polynomial ring $R[X]$ over the ideal 
$\langle X^n-\lambda\rangle$ generated by $X^n-\lambda$, where
$\lambda$ is a unit (invertible element) of~$R$. 
Any ideal~$C$ of $\R$ is called a {\em $\lambda$-constacyclic code} 
of length $n$ over $R$;
and~$C$ is said to be {\em cyclic} once $\lambda=1$.
If the ideal $C$ is principal (i.e.,  generated by one element), 
we also say that the $\lambda$-constacyclic code $C$ is {\em principal}. 
Any $a(X)=\sum_{i=0}^{n-1}a_iX^i\in\R$ is 
a word $(a_0,a_1,\cdots,a_{n\!-1\!})\in R^n$; 
the {\em Hamming weight} ${\rm w_H}(a(X))$ is defined to be 
the number of the non-zero coefficients $a_i\ne 0$. 
And the {\em minimum Hamming weight} ${\rm w_H}(C)$ of the code $C$
is defined to be the minimal Hamming weight of non-zero
{\em code words} in $C$.

A ring is called a {\em finite chain ring} if it is finite and
its ideals form a chain with respect to the inclusion relation.
Finite fields are a special kind of finite chain rings.
A finite ring is a principal ideal ring if and only if it is a direct sum of 
finite chain rings (cf. \cite{McD} or Eq.\eqref{being PIR} below). 
Thus the study on constacyclic codes over finite principal ideal rings
would be reduced to the case over finite chain rings
(cf. Lemma \ref{R=R_1+...} below).

Cyclic and constacyclic codes over finite fields take an 
important part in coding theory and practice, 
since they have nice algebraic structures, 
e.g., they are invariant by (consta-)cyclic shifts; and they
have nice distance behaviour as well, 
e.g., BCH bounds for minimum Hamming weights. 
For example, see \cite{HP}.

Initiated by \cite{HKCSS}, coding over finite rings was developed very much. 
Especially, cyclic and constacyclic codes over finite chain rings 
attracted a lot of attention, e.g., \cite{FNKS, HL, NS}.
If the code length $n$ is coprime to the {\em characteristic} of 
the finite chain ring $R$, \cite{DL} showed that
the cyclic and negacyclic codes over~$R$ are principal; 
and a kind of generator polynomials for such codes
are exhibited in \cite[Theorem 3.6]{DL}, 
and in \cite[Theorem 2.4]{BFT16} (based on \cite{DL}) also.
As for constacyclic codes, 
there are a lot of works varied from case to case: 
over particular finite chain rings~$R$, for particular $\lambda\in R$
and particular code length~$n$; e.g., 
\cite{C13, CCDFM, D10, DFLLS, DKK, LL, LM, LS}.

We are concerned with two questions: when are the constacyclic codes over 
a finite principal ideal ring all principal? and, if it is the case, what about the
generator polynomials, check polynomials and BCH bounds?

In Section 2 we sketch preliminaries about finite chain rings 
and finite principal ideal rings. 

In Section 3, we consider the constacyclic codes over a finite chain 
ring~$R$ of length $n$ coprime to the characteristic of $R$.
We show in a ring-theoretic way that 
the quotient ring $R[X]/\langle X^n=\lambda\rangle$ is principal.
%Extending the result 
Inspiring by \cite[Theorem~3.6]{DL} on cyclic codes, 
we transform the generator (of ring-theoretic style) of any constacyclic code $C$  
in to a generator polynomial of a typical coding-theoretic form, 
and get a check polynomial of~$C$ as well. 
%construct explicitly, in a different way from \cite{DL}, 
The generator polynomial and the check polynomial
of $C$ are proved unique up to some sense. 
Moreover, we exhibit a BCH bound for such constacyclic codes. 

In Section 4, we get a necessary and sufficient condition that
the {\em isometrically cyclic} codes (see Definition \ref{d iso cyclic} below) over 
a finite principal ideal ring are all principal. 
That condition is still sufficient for all constacyclic codes 
over finite principal ideal rings being principal.
However,  an example shows that it is no longer necessary 
for such codes being principal. 
    
\section{Preliminaries}\label{preliminaries}
In this paper any ring $R$ is commutative and with identity $1_R$ (or $1$ for short). 
Subrings and ring homomorphisms are identity-preserving.  
By $R^\times$ we denote the multiplicative group consisting of
 all units (invertible elements) of $R$. 

For $a\in R$, the ideal generated by $a$ is denoted by $\langle a\rangle$,
called a {\em principal ideal} of $R$.
Then $\langle a\rangle=Ra:=\{ba\mid b\in R\}$. 
So both $Ra$ and $\langle a\rangle$ denote the principal ideal generated by $a$.
Two elements $a, b$ of $R$ are said to be {\em coprime} if 
${R}a+{R}b={R}$
(equivalently, there are elements $a',b'$ such that $aa'+bb'=1$).
We denote $a|b$ if $b\in {R}a$ 
(equivalently, there is an element $a'$ such that $b=a'a$).
We list some known facts in the following lemma.

\begin{lemma}\label{coprime}
Let $a, b\in R$ be coprime, and $c\in R$. The following hold:

{\bf(1)}
  If $a$ and $c$ are also coprime, then $a$ and $bc$ are coprime.

{\bf(2)} ${R}(ac)+{R}(bc)={R}c$.

{\bf(3)}  ${R}(ac)\cap{R}(bc)\!=\!{R}(abc)$. 
(Equivalently, if $ac | d$ and $bc | d$, then $abc | d$.) 

{\bf(4)}
 If $ab=0$, then ${R}={R}a\oplus{R}b$,
${R}a$ is a ring (but not a subring of $R$ in general), 
$a$ is a unit of ${R}a$ (not a unit of $R$ in general), and $Ra^2=Ra$.
\end{lemma}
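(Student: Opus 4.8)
The plan is to derive all four items from a single Bézout identity: since $a$ and $b$ are coprime, fix once and for all $a',b'\in R$ with $aa'+bb'=1$. Items (1)--(3) then fall out by multiplying this identity by suitable elements, and item (4) is essentially a corollary of (3).

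For (1), I would also fix $a'',c''$ with $aa''+cc''=1$ (coprimality of $a,c$) and multiply the two relations; collecting the three terms that contain a factor $a$ gives $1=a\cdot u+(bc)(b'c'')$ for some $u\in R$, so $a$ and $bc$ are coprime. For (2), the inclusion $R(ac)+R(bc)\subseteq Rc$ is immediate, and the reverse follows from $c=(aa'+bb')c=a'(ac)+b'(bc)$. For (3), again $R(abc)\subseteq R(ac)\cap R(bc)$ is clear; for the reverse take $d=x\,ac=y\,bc$ in the intersection and compute $d=(aa'+bb')d=a'a(ybc)+b'b(xac)=abc\,(a'y+b'x)\in R(abc)$. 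The parenthetical divisibility statement is just this assertion rewritten with the $|$ notation.

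For (4), assume $ab=0$. Coprimality gives $Ra+Rb=R$, and applying (3) with $c=1$ gives $Ra\cap Rb=R(ab)=0$, so $R=Ra\oplus Rb$ as $R$-modules. Writing $1=e+f$ with $e\in Ra$, $f\in Rb$, I would observe that $ef\in Ra\cap Rb=0$, so squaring yields $e^2=e$ and $f^2=f$; and for any $x\in Ra$ one has $xf\in Ra\cap Rb=0$, hence $xe=x$. Thus $(Ra,+,\cdot)$ is a commutative ring whose identity is $e$ (not $1_R$, which is why it need not be a subring of $R$). To see that $a$ is a unit of $Ra$, multiply $e$ by $1=aa'+bb'$: since $ebb'\in Ra\cap Rb=0$, we get $e=a(ea')$ with $ea'\in Ra$, so $a$ is invertible in $Ra$. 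Finally, writing $e=ra$ and using $a=ea$ (as $e$ is the identity of $Ra$) gives $a=ra^2\in Ra^2$, hence $Ra=Ra^2$.

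The only place needing genuine care is (4): the statement deliberately treats $Ra$ as a ring in its own right, so one must not reason with $1_R$ but instead extract the correct idempotent $e$ from the decomposition $1=e+f$ and verify everything — in particular the invertibility of $a$ — relative to $e$. Items (1)--(3) are each a one-line consequence of the fixed Bézout identity, so I do not anticipate any obstacle there.
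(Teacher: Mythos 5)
Your proof is correct and takes essentially the same route as the paper: a fixed B\'ezout identity $aa'+bb'=1$ drives (1)--(3), and (4) follows by applying (3) with $c=1$ to get $Ra\cap Rb=0$ and then extracting the idempotent $\iota_a$ (your $e$) from $1=\iota_a+\iota_b$. The only superficial difference is that the paper leaves (1) and (2) to the reader and, in (4), shows $a$ is invertible in $Ra$ by writing $\iota_a=da$ for an arbitrary $d$ and computing $\iota_a=\iota_a^2=(\iota_a d)a$, whereas you use the specific $a'$ from the B\'ezout identity to get $e=a(ea')$; both are the same idea.
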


\begin{proof} (1) and (2) are checked as usual.

(3). Assume that $d=ace=bcf$ and $aa'+bb'=1$; then 
$d=daa'+dbb'=bcfaa'+acebb'=abc(fa'+eb')$. 

(4). If $ab=0$, from (3) (by taking $c=1$), we get ${R}a\cap{R}b={R}(ab)=0$;
so ${R}={R}a\oplus{R}b$. Let $1=\iota_a+\iota_b$ with
$\iota_a\in{R}a$ and $\iota_b\in{R}b$. Then $\iota_a$ is the identity of ${R}a$, 
hence $Ra$ is a ring.
There is a $d\in{R}$ such that $\iota_a=da=(\iota_ad)a$, 
i.e., $a\in{R}a$ is invertible. 
The principal ideal generated by a unit of a ring is the ring itself. 
Thus $Ra^2=(Ra)a=Ra$.
\end{proof}

\begin{remark}\label{direct sum ideal}\rm
An ideal $D$ of $R$ is a ring if and only if $D$ has an identity $\iota$
(in particular, $\iota$ is an {\em idempotent}, i.e., $\iota^2=\iota$),
if and only if $R=D\oplus D'$ with $D=R\iota$ and $D'=R(1-\iota)$. 
In that case, any ideal $C$ of $R$ is decomposed as
$C=(C\cap D)\oplus (C\cap D')$ with 
$C\cap D=\iota C$ and $C\cap D'=(1-\iota)C$.
\end{remark}
 
Let $\Z$ be the integer ring, and $\Z_m$ be the residue ring modulo $m$.
For any finite ring $R$,
the image of the homomorphism $\Z\to R$, $k\mapsto k1_{R}$, 
 is the smallest subring of $R$ generated by $1_R$, 
 which must be isomorphic to a residue ring $\Z_{m}$.
The smallest subring of $R$ is called the {\em prime subring} of $R$, 
and denoted by $\Z_{m}$ again. The integer $m$ is called the 
{\em characteristic} of $R$, denoted by ${\rm char} R$.

By fundamentals of ring theory (see \cite{J} for example), 
the {\em Jacobson radical} $J(R)$ of a finite ring $R$ is 
the unique maximal nilpotent ideal of $R$, 
and $\ol R:=R/J(R)$ is a direct sum of finite fields.  
And, any idempotent $\ol\iota$ of $\ol R$ is uniquely lifted to 
an idempotent $\iota$ of $R$.  

A finite ring $R$ is said to be {\em local} if $R/J(R)$ is a field 
(called the {\em residue field}). Note that
the sum of a unit and a nilpotent element is a unit. 
A finite ring $R$ is local if and only if the difference set $R-J(R)=R^\times$.
By the idempotent lifting mentioned above, 
$R$ is local if and only if the identity $1_R$ is the unique non-zero idempotent. 
By idempotent lifting again, any finite ring $R$ can be written as:
\begin{equation}\label{+local}
R=R_1\oplus\cdots\oplus R_m\quad \mbox{with all $R_i$'s being local;}
\end{equation}  
and it is easy to see that 
${\rm char} R={\rm lcm}\big({\rm char} R_1,\cdots,{\rm char} R_m\big)$,
 the least common multiple of ${\rm char} R_1,\cdots,{\rm char} R_m$.
Some well-known preliminaries about finite local rings
 are listed in the following lemma, please cf. \cite{GM,McD,N08} for details. 

\begin{lemma}\label{facts on local} 
Let $R$ be a finite local ring, 
$F:=\ol R=R/J(R)$ be the residue field.
 
{\rm(1)} The characteristic ${\rm char}\,R=p^{s}$ for
a prime $p$ and $s\ge 1$. Hence ${{\rm char}F=p}$, which we call
the {\em residue characteristic} of the local ring $R$. 

{\rm(2) (Hensel's Lemma)} Let $f(X)\in R[X]$ be monic. If the residue polynomial 
$\ol f(X)=\zeta(X)\cdot\eta(X)$ where $\zeta(X),\eta(X)\in F[X]$
are monic and coprime, then there are unique 
monic $g(X),h(X)\in R[X]$ which are coprime 
such that their residues $\ol g(X)=\zeta(X)$, $\ol h(X)=\eta(X)$, 
and $f(X)=g(X)h(X)$. 

{\rm(3) (Unique factorization)}
If $f(\!X\!)\!\in\! R[X]$ is monic and its residue  
$\ol f(\!X\!)\!\in\! F[X]$ has no repeated roots, 
then $f(\!X\!)$ is uniquely (up to reordering) factored into a product 
of {\em basic irreducible} polynomials   
{ (A monic $\varphi(X)\in R[X]$ is said to be {\em basic irreducible}
if its residue $\ol\varphi(X)\in F[X]$ is irreducible)}.
\end{lemma}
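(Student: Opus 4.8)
I would prove the three parts in turn, the middle one being the real work.

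\emph{Part (1).} Since $R$ is finite and nonzero, its prime subring is $\Z_m$ with $m={\rm char}\,R\ge 2$. Writing $m=p_1^{e_1}\cdots p_k^{e_k}$, the Chinese Remainder Theorem gives $\Z_m\cong\prod_i\Z_{p_i^{e_i}}$; if $k\ge2$ this ring — and hence $R$ — would contain a non-trivial idempotent, contradicting the fact recalled in the text that $1_R$ is the only non-zero idempotent of a finite local ring. Hence $m=p^s$. Since $(p1_R)^s=0$, the element $p1_R$ is nilpotent, so it lies in $J(R)$ and thus maps to $0$ in the field $F$; therefore ${\rm char}\,F$ is a prime dividing $p^s$, i.e. ${\rm char}\,F=p$.

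\emph{Part (2).} The engine is that $J:=J(R)$ is nilpotent (as $R$ is finite), say $J^N=0$. I would construct, by induction on $j$, monic polynomials $g_j,h_j\in R[X]$ with $\deg g_j=\deg\zeta$, $\deg h_j=\deg\eta$, $\ol{g_j}=\zeta$, $\ol{h_j}=\eta$, and $f\equiv g_jh_j\pmod{J^jR[X]}$; then $g:=g_N$, $h:=h_N$ satisfy $f=gh$. The base step $j=1$ is any choice of monic lifts. For the step $j\to j+1$, fix a Bézout relation $u\zeta+v\eta=1$ in the PID $F[X]$, lift $u,v$ to $R[X]$, and seek corrections $g_{j+1}=g_j+\delta$, $h_{j+1}=h_j+\epsilon$ with $\delta,\epsilon\in J^jR[X]$; since $\delta\epsilon\in J^{2j}R[X]\subseteq J^{j+1}R[X]$, it is enough to solve $h_j\delta+g_j\epsilon\equiv f-g_jh_j\pmod{J^{j+1}R[X]}$, which the lifted Bézout relation makes possible, after using the division algorithm by the monic polynomial $g_j$ to force $\deg\delta<\deg\zeta$ (hence also $\deg\epsilon<\deg\eta$) so that $g_{j+1},h_{j+1}$ stay monic of the correct degrees. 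For coprimality of $g,h$ in $R[X]$: lifting $u\zeta+v\eta=1$ gives $ug+vh=1+w$ with every coefficient of $w$ in the nilpotent ideal $J$, so $w$ is nilpotent in $R[X]$ and $1+w\in R[X]^\times$; dividing yields $ag+bh=1$. This degree bookkeeping inside the successive approximation is the step I expect to demand the most care. Finally, uniqueness: if $f=gh=g'h'$ with both pairs monic, coprime, and reducing to $\zeta,\eta$, then $g$ and $h'$ are coprime in $R[X]$ by the same lifting trick, so $g\mid g'h'$ forces $g\mid g'$; by symmetry $g'\mid g$, and since both are monic of equal degree, $g=g'$, hence $h=h'$.

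\emph{Part (3).} Factor $\ol f=\varphi_1\cdots\varphi_r$ in $F[X]$; as $\ol f$ has no repeated root, the $\varphi_i$ are pairwise distinct monic irreducibles, hence pairwise coprime. Applying (2) to the splitting $\ol f=\varphi_1\cdot(\varphi_2\cdots\varphi_r)$ peels off a monic factor $\phi_1\mid f$ with $\ol{\phi_1}=\varphi_1$; iterating on the complementary monic factor produces $f=\phi_1\cdots\phi_r$ with each $\ol{\phi_i}=\varphi_i$ irreducible, i.e. each $\phi_i$ basic irreducible. For uniqueness, given another factorization $f=\psi_1\cdots\psi_t$ into basic irreducibles, reduce modulo $J$ and use unique factorization in $F[X]$ to get $t=r$ and, after reindexing, $\ol{\psi_i}=\varphi_i$; then the uniqueness part of (2), applied repeatedly to peel off $\phi_1=\psi_1$ and so on, gives $\phi_i=\psi_i$ for all $i$.
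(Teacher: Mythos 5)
The paper does not prove this lemma at all: it is stated as a collection of well-known facts with a pointer to the references \cite{GM,McD,N08}, so there is no in-paper argument to compare against. Your proposal is a correct, self-contained proof along the standard lines found in those references: part (1) via the absence of nontrivial idempotents in a finite local ring (a fact the paper itself recalls just before the lemma) plus nilpotency of $p1_R$; part (2) by successive approximation modulo the powers $J(R)^j$ of the nilpotent Jacobson radical, with a lifted B\'ezout relation and division by the monic $g_j$ to control degrees, and coprimality/uniqueness extracted from the unit $1+w$ with $w$ having nilpotent coefficients; part (3) by peeling off factors with (2) and transporting uniqueness from $F[X]$. Two spots deserve one more line than you give them: the parenthetical ``hence also $\deg\epsilon<\deg\eta$'' is not automatic from choosing $\deg\delta<\deg\zeta$ --- you should observe that $g_j\epsilon\equiv c-h_j\delta \pmod{J^{j+1}R[X]}$ has degree $<\deg f$, and since $g_j$ is monic this forces all coefficients of $\epsilon$ in degrees $\ge\deg\eta$ to lie in $J^{j+1}$, so you may truncate $\epsilon$ without disturbing the congruence (you flagged this bookkeeping yourself); and in the uniqueness step, passing from $gh=gh'$ to $h=h'$ uses that a monic polynomial is a non-zero-divisor in $R[X]$. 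With those remarks added, the argument is complete and, if anything, makes the paper more self-contained than its citation-only treatment.
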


Finite chain rings and finite principal ideal rings have been introduced before. 
The following lemma implies that
finite chain rings are just finite local principal ideal rings. 

\begin{lemma}\label{being chain}
A finite ring $R$ is a chain ring if and only if $R$ has a nilpotent element $\pi$ 
such that $R/R\pi$ is a field. 
\end{lemma}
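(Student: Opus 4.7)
The plan is to prove both implications directly from the chain hypothesis on one side, and from the existence of a ``uniformizer'' $\pi$ on the other side.

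For the forward direction, assume $R$ is a finite chain ring. Since its ideals are totally ordered and $R$ is finite, there is a unique maximal proper ideal $\mathfrak m$, so $R$ is local with residue field $R/\mathfrak m$. I would pick any $\pi\in\mathfrak m$ lying outside the next ideal in the chain below $\mathfrak m$; then $R\pi\subseteq\mathfrak m$, and because $R\pi$ is itself an ideal in the chain and contains $\pi$, comparability forces $R\pi=\mathfrak m$. So $R/R\pi$ is a field. Finally, $\pi\in\mathfrak m=J(R)$, and in a finite ring the Jacobson radical is nilpotent (as already noted in the excerpt), so $\pi$ is nilpotent.

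For the converse, suppose $\pi\in R$ is nilpotent with $R/R\pi$ a field, and let $t\ge 1$ be minimal with $\pi^t=0$. First I would check $R$ is local with maximal ideal $R\pi$: any maximal ideal $\mathfrak n$ of $R$ contains every nilpotent element (the nilradical lies in $J(R)$), so $\pi\in\mathfrak n$, giving $R\pi\subseteq\mathfrak n$; since $R\pi$ is already maximal, $\mathfrak n=R\pi$. Consequently every non-unit of $R$ belongs to $R\pi$.

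The main step is to show every ideal of $R$ has the form $R\pi^{j}$, so that the ideals form the chain $R=R\pi^0\supseteq R\pi\supseteq\cdots\supseteq R\pi^{t}=0$. Let $I$ be a nonzero ideal, and let $j$ be the largest integer with $I\subseteq R\pi^{j}$ (this exists because $R\pi^{t}=0$). Choose $a\in I\setminus R\pi^{j+1}$ and write $a=b\pi^{j}$; if $b$ were in $R\pi$, then $a\in R\pi^{j+1}$, a contradiction, so $b\notin R\pi$ and hence $b\in R^{\times}$. Then $\pi^{j}=b^{-1}a\in I$, giving $R\pi^{j}\subseteq I$, whence $I=R\pi^{j}$.

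The only mild obstacle is the converse: getting from ``$R\pi$ is a maximal ideal with nilpotent generator'' to ``every ideal is $R\pi^{j}$'' requires the unit/non-unit dichotomy supplied by locality, so it is essential to establish locality before running the ``largest $j$'' argument.
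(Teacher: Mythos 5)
Your proof is correct and follows essentially the same route as the paper: in the forward direction you use the chain condition to produce a single generator $\pi$ of the maximal ideal $J(R)$ (you pick $\pi$ just above the immediate predecessor of $\mathfrak m$, while the paper argues by contradiction with two incomparable principal ideals, a cosmetic difference), and in the converse you establish locality with maximal ideal $R\pi$ and then run the same ``largest $j$ with $I\subseteq R\pi^{j}$, extract a unit factor'' argument as the paper to show every ideal is some $R\pi^{j}$. The only negligible point is the degenerate case $\mathfrak m=0$ (when $R$ is a field), where one simply takes $\pi=0$.
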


\begin{proof} Let $R$ be a chain ring. Then $R$ has a unique maximal ideal
which must be the Jacobson radical $J(R)$. Since $J(R)$ is maximal, $R/J(R)$ is a field.
Suppose that $J(R)\ne\langle a\rangle$ for any $a\in J(R)$, then
there are $a,b\in J(R)$ such that $\langle a\rangle$ and~$\langle b\rangle$
cannot contain each other, that is a contradiction to that $R$ is a chain ring.
Thus there is a $\pi\in J(R)$ such that $J(R)=R\pi$.

Conversely, assume that $\pi\in R$ such that $\pi^\ell=0$,  $\pi^{\ell-1}\ne 0$ 
and $R/R\pi$ is a field. Then $R\pi$ is a nilpotent ideal hence $R\pi=J(R)$, and
\begin{equation}\label{chain of ideals}
 R\supsetneq R\pi\supsetneq\cdots
  \supsetneq R\pi^{\ell-1}\supsetneq R\pi^\ell=0
\end{equation}
is a chain of ideals of $R$.  
For any non-zero ideal $C$ of $R$,  
there is an index $j$, $0\le j<\ell$, such that
$C\subseteq R\pi^j$ but $C\,{\not\subseteq}\,R\pi^{j+1}$.
Then there is a $c\in C\!-\!R\pi^{j+1}$, 
 $c=u\pi^j$ with $u\in R\!-\!R\pi$ (i.e., $u$ is a unit).
So $C\supseteq Rc=Ru\pi^j=R\pi^j$. That is, $C=R\pi^j$.
Thus, the chain \eqref{chain of ideals} are all ideals of $R$,
and $R$ is a chain ring.
\end{proof}

Let $R$ be a finite chain ring as in Lemma \ref{being chain}.
By the lemma and the proof, the radical $J(R)=R\pi$ and 
all ideals of $R$ are listed in Eq.\eqref{chain of ideals},
where $\ell$ is called the {\em nilpotency index} of $J(R)$ (of $\pi$).
We further assume that ${\rm char}\,R=p^{s}$ for a prime $p$ and $s\ge 1$
(hence $\Z_{p^s}$ is the prime subring of $R$), 
and that the residue field $F:=\ol R=R/R\pi={\rm GF}(p^r)$ 
is the finite field (Galois field) of cardinality~$p^r$.
Then the unit group $F^\times\!=\!\langle\ol\gamma\rangle$ generated 
by an element $\ol\gamma$ of order $p^r\!-\!1$. So 
$X^{p^r-1}\!-\!1=\ol g(X)\ol h(X)$, where $\ol g(X),\ol h(X)\in\Z_p[X]$, 
 $\ol g(X)$ is monic irreducible and 
 $\ol g(\ol\gamma)\!=\!0$, hence $\deg\ol g(X)\!=\!r$. 
 By Hensel's Lemma (Lemma \ref{facts on local}(2)), we have 
 $g(X), h(X)\in \Z_{p^s}[X]$ and $\gamma\in R$ such that 
 $X^{p^r-1}\!-\!1=g(X)h(X)$, $g(X)$ is monic basic irreducible, 
$g(\gamma)\!=\!0$  and $\deg g(X)\!=\!r$. 
In $R$, we get a subring 
$\Z_{p^s}[\gamma]$ generated by $\gamma$ over $\Z_{p^s}$, and
$\Z_{p^s}[\gamma]\cong\Z_p[X]/\langle g(X)\rangle$,  which is
called the {\em Galois ring} of invariants $p^s, r$ 
and denoted by ${\rm GR}(p^s, r)$ (cf. \cite{W03}).
Then there is an integer $k\ge 1$ such that
\begin{equation}\label{Eisenstein equation}
  p\in R\pi^k-R\pi^{k+1}, \quad \mbox{equvalently, ~
  $\pi^k=pu$~ for a $u\in R^\times$;}
\end{equation}
hence $(s-1)k<\ell\le sk$.

\begin{remark}\rm
As described above, 
a finite chain ring $R$ has five invariants $p,r,s,k,\ell$
satisfying that $(s-1)k<\ell\le sk$,
where $p,r,s$ are the invariants of the Galois subring 
$\Z_{p^s}[\gamma]={\rm GR}(p^s,r)$. 
In the following two cases finite chain rings are uniquely determined 
up to isomorphism by their invariants $p,r,s,k,\ell$:

$\bullet$~ 
$k=1$ (hence $\ell=s$, three invariants $p,r,s$ remain), 
i.e., $R={\rm GR}(p^s,r)$;

$\bullet$~ 
$s=1$ (hence $\ell=k$, three invariants $p,r,\ell$ remain), 
i.e., $R=F[X]/\langle X^\ell\rangle$.

\noindent
In general, however, there may be many finite chain rings 
non-isomorphic each other having the same invariants $p,r,s,k,\ell$.
As far as we know, it is still an open question how to classify finite chain rings. 
We refer to \cite{CL, Hou01} for details.
\end{remark}

\begin{lemma}\label{Galois extension}
Let $R$ be a finite chain ring with invariants $p,r,s,k,\ell$. 
Let $\varphi(X)\in R[X]$ be basic irreducible 
(i.e., $\varphi(X)$ is monic and $\ol\varphi(X)\in F[X]$ is irreducible) 
with $d:=\deg\varphi(X)$. 
Then $R[X]\big/\langle\varphi(X)\rangle$ is a finite chain ring 
with invariants $p,rd,s,k,\ell$. 
\end{lemma}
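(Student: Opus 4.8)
The plan is to set $S:=R[X]\big/\langle\varphi(X)\rangle$ and to verify that $S$ meets the criterion of Lemma~\ref{being chain}. Since $\varphi(X)$ is monic of degree $d\ge 1$, the ring $S$ is a free $R$-module with basis $1,X,\dots,X^{d-1}$; in particular $S$ is finite, and the natural homomorphism $R\to S$ is injective, so $R$ may be regarded as a subring of $S$. Write $\pi$ also for the image of $\pi$ in $S$; as $\pi^{\ell}=0$ in $R$, it is nilpotent in $S$. The first step is to compute $S/S\pi$: using $S/S\pi\cong R[X]\big/\big(\langle\varphi(X)\rangle+\pi R[X]\big)$ and carrying out the two quotients in the other order, one obtains $S/S\pi\cong(R/R\pi)[X]\big/\langle\ol\varphi(X)\rangle=F[X]\big/\langle\ol\varphi(X)\rangle$. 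Because $\ol\varphi(X)$ is irreducible over $F={\rm GF}(p^{r})$ with $\deg\ol\varphi(X)=d$, this quotient is a field of cardinality $(p^{r})^{d}=p^{rd}$, that is, ${\rm GF}(p^{rd})$. Hence Lemma~\ref{being chain} applies: $S$ is a finite chain ring with $J(S)=S\pi$ and residue field ${\rm GF}(p^{rd})$, which already identifies the invariant $rd$.

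It then remains to read off the other four invariants of $S$. The prime subring of $S$ is generated by $1_{S}=1_{R}$, hence equals the prime subring of $R$, so ${\rm char}\,S={\rm char}\,R=p^{s}$; thus the invariant $s$ is unchanged, and the residue characteristic of $S$ is ${\rm char}\,{\rm GF}(p^{rd})=p$. For the last two invariants the crucial observation is freeness: expanding elements in the basis $1,X,\dots,X^{d-1}$ gives $S\pi^{j}=\bigoplus_{i=0}^{d-1}(R\pi^{j})X^{i}$ for every $j\ge 0$, and in particular $S\pi^{j}\cap R=R\pi^{j}$. Since $\pi^{\ell-1}\ne 0$ in $R$ it is nonzero in $S$, while $\pi^{\ell}=0$; so $J(S)=S\pi$ has nilpotency index exactly $\ell$. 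Finally, $k$ is characterised by $R\pi^{k}=Rp$ (equivalently $\pi^{k}=pu$ with $u\in R^{\times}$), taking $k$ minimal in the degenerate case $p=0$; from $R\pi^{k}=Rp$ and the displayed formula for the $S\pi^{j}$ one gets $S\pi^{k}=Sp$, while for $j<k$ the strict inclusion $R\pi^{j}\supsetneq R\pi^{k}$ (valid because $k\le\ell$) yields $S\pi^{j}\supsetneq Sp$. Hence the invariant $k$ of $S$ is again $k$, and altogether $S$ has invariants $p,rd,s,k,\ell$.

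The proof is essentially bookkeeping once Lemma~\ref{being chain} is at hand, so I do not anticipate a genuine obstacle. The two points deserving a little care are the isomorphism $S/S\pi\cong F[X]/\langle\ol\varphi(X)\rangle$ — i.e.\ that forming $S$ and reducing modulo the radical commute — and the freeness identity $S\pi^{j}\cap R=R\pi^{j}$, which is exactly what forces the invariants $k$ and $\ell$ to take their precise values rather than merely bounds. No deeper tool is needed for the lemma itself; Hensel's Lemma (Lemma~\ref{facts on local}(2)) intervenes only to guarantee that basic irreducible polynomials $\varphi(X)$ exist in the first place.
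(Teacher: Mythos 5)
Your proposal is correct and follows essentially the same route as the paper's proof: apply Lemma~\ref{being chain} to the element $\pi$ in $S=R[X]/\langle\varphi(X)\rangle$ after computing $S/S\pi\cong F[X]/\langle\ol\varphi(X)\rangle={\rm GF}(p^{rd})$, then observe the remaining invariants are inherited from $R$. The paper is terser on the last two invariants --- it dispatches $\ell$ with ``obviously $\pi^{\ell-1}\ne0$'' and $k$ with the remark $R^\times\subseteq R_\varphi^\times$ --- whereas you spell out the freeness identity $S\pi^j\cap R=R\pi^j$ that underlies both; this added detail is sound and, if anything, a slight improvement in rigour, but it is not a different method.
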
 

\begin{proof} Denote $R_\varphi=R[X]\big/\langle\varphi(X)\rangle$.
Obviously, $\pi\in R_{\varphi}$, $\pi^\ell=0$ but $\pi^{\ell-1}\ne 0$. And
$$R_{\varphi}/R_{\varphi}\pi \cong \ol R[X]\big/\langle \ol\varphi(X)\rangle
  =F[X]/\langle \ol\varphi(X)\rangle={\rm GF}(p^{rd}),$$
which is a field extension over $F={\rm GF}(p^r)$ of extension degree $d$.
By Lemma~\ref{being chain}, 
$R_{\varphi}$ is a chain ring with radical $J(R_{\varphi})=R_{\varphi}\pi$.
And, if $\pi$ in $R$ satisfies Eq.\eqref{Eisenstein equation}, 
then so is the $\pi$ in $R_{\varphi}$ 
(because $R^\times\subseteq R_\varphi^\times$).
\end{proof}

By Eq.\eqref{+local} and Lemma \ref{being chain}, 
any finite principal ideal ring $R$ is written as
\begin{equation}\label{being PIR}
 R=R_1\oplus\cdots\oplus R_m\quad 
 \mbox{with all $R_i$'s being finite chain rings.} 
\end{equation}
Conversely, the ring $R$ in Eq.\eqref{being PIR} must
be a finite principal ideal ring; because: assuming that 
$J(R_i)=R_i\pi_i$ of nilpotency index $\ell_i$,
for any ideal~$C$ of~$R$ we have, by Remark \ref{direct sum ideal} 
and Eq.\eqref{chain of ideals}, that
\begin{equation*}%\label{C=C cap...}
 C=C\cap R_1\oplus\cdots\oplus C\cap R_m
 =R_1\pi_1^{e_1}\oplus\cdots\oplus R_m\pi_m^{e_m},
 ~ 0\le e_i\!\le\! \ell_i,~ i\!=\!1,\cdots,m;
\end{equation*}
hence $C$ is generated by one element $ \pi_1^{e_1}+\cdots+\pi_m^{e_m}$.
In this way, the ideals of the ring $R$ in Eq.\eqref{being PIR}
are one-to-one corresponding to the integer sequences 
$(e_1,\cdots,e_m)$,  $0\le e_i\le \ell_i$ for $i=1,\cdots,m$.
For such a sequence $e_1,\cdots,e_m$
and any $u_i\in R_i^\times$, 
the element $\pi_1^{e_1}u_1+\cdots+\pi_m^{e_m}u_m$ 
also generates the ideal:
\begin{equation}\label{PIR}
 R_1\pi_1^{e_1}\oplus\cdots\oplus R_m\pi_m^{e_m}
 =R\cdot(\pi_1^{e_1}u_1+\cdots+\pi_m^{e_m}u_m).
\end{equation}
Please see \cite{McD} or \cite{N08}.

\section{Constacyclic codes over finite chain rings}\label{cyclic codes}

Through out this section, we assume that
\begin{itemize}
\item
$R$ is a finite chain ring with radical $J(\!R)\!=\!R\pi$ of nilpotency index~$\ell$,
where \linebreak 
$\pi\!\in\! J(\!R)\!-\!J(\!R)^2$, 
and the residue field $F\!=\!\ol R\!=\!R/\!J(\!R)={\rm GF}(q)$ with~$q\!=\!p^r$;
\item $n>1$ is an integer, $\gcd(n,p)=1$;  
\item
$\R=R[X]\big/\langle X^n-\lambda\rangle$,  where 
$\lambda\in R^\times$ (hence $\ol\lambda\in F^\times$). 
\end{itemize}

With the assumption above, in this section
we consider the $\lambda$-constacyclic codes of length $n$ over $R$,
i.e., the ideals of $\R$.
We will characterize the algebraic structures of these codes; 
and exhibit a BCH bound for them;  we will conclude this section by
an example to illustrate the obtained results.   

We begin by fixing notation.
In $F[X]$ we have an irreducible decomposition 
(in Remark \ref{q-cosets} we'll show how to get the decomposition):
\begin{equation}\label{X^n-lambda in F}
X^n-\ol\lambda=\ol\varphi_1(X)\cdots\ol\varphi_m(X),
\qquad\deg\ol\varphi_i(X)=d_i, ~~ i=1,\cdots,m,
\end{equation}
where
$\ol\varphi_1(X)$, $\cdots$, $\ol\varphi_m(X)\in F[X]$ 
are monic irreducible and coprime to each other.
By Hensel's Lemma, 
$\ol\varphi_i(X)$, $i=1,\cdots,m$, are uniquely lifted to monic polynomials 
$\varphi_1(X)$, $\cdots$, $\varphi_m(X)\in R[X]$ 
such that, in $R[X]$, $\varphi_1(X)$, $\cdots$, $\varphi_m(X)$ are basic irreducible 
and coprime to each other, and
\begin{equation}\label{X^n-lambda in R}
X^n-\lambda=\varphi_1(X)\cdots\varphi_m(X).
\end{equation}
With respect to this decomposition we set 
\begin{equation}\label{def I}I=\{1,2,\cdots,m\}\end{equation} 
to be the subscript set. For any subset $I'\subset I$, 
 define
\begin{equation}\label{phi_I}\textstyle
 \varphi_{I'}(X)=\prod_{i\in I'}\varphi_i(X),\quad 
 \wh\varphi_{I'}(X)=\varphi_{I- I'}(X)=(X^n-\lambda)/\varphi_{I'}(X);
\end{equation}
where we appoint that $\varphi_{\emptyset}(X)=1$, 
hence $\wh\varphi_{\emptyset}(X)=X^n-\lambda$.
If $I'=\{i\}$ contains exactly one subscript $i$, we abbreviate them by
\begin{equation}\label{phi_i's}
\varphi_i(X)=\varphi_{\{i\}}(X),\quad 
\wh\varphi_{i}(X)=\wh\varphi_{\{i\}}(X)=(X^n-\lambda)/\varphi_{i}(X).
\end{equation}

\subsection{Algebraic structures of constacyclic codes over $R$}

\begin{theorem}\label{constacyclic} 
Let notation be as in Eq.\eqref{X^n-lambda in F}--\eqref{phi_i's}. 
Then $\R $ is a direct sum of ideals
$\R \wh\varphi_i(X)$ for $i\in I$ as follows:
\begin{equation}\label{inner CRT}
  \R =\R \wh\varphi_1(X)\oplus \R \wh\varphi_2(X)
   \oplus\cdots\oplus \R \wh\varphi_m(X);
\end{equation}
as rings, the ideal
$\R \wh\varphi_i(X) \cong  R[X]/\langle\varphi_i(X)\rangle$
 is a finite chain ring such that:
 
{\bf(1)} the radical 
 $J\big(\R \wh\varphi_i(X)\big)=\R \pi\wh\varphi_i(X)$ 
 of nilpotency index~$\ell$; 
 
{\bf(2)} the residue field 
$\R\wh\varphi_i(X)/\R \pi\wh\varphi_i(X)={\rm GF}(p^{rd_i})$;
 
{\bf(3)} $\wh\varphi_i(X)$ is a unit of the ring $\R \wh\varphi_i(X)$.
 
\noindent
 In particular, $\R$ is a finite principal ideal ring.
\end{theorem}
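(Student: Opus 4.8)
The plan is to build everything on the factorization \eqref{X^n-lambda in R} and the coprimality of the $\varphi_i(X)$, using the Chinese Remainder Theorem in its ideal-theoretic form together with Lemma~\ref{coprime}(4) and Lemma~\ref{Galois extension}. First I would observe that since $\varphi_1(X),\ldots,\varphi_m(X)$ are pairwise coprime in $R[X]$, their images in $\R$ remain pairwise coprime, and by Lemma~\ref{coprime}(1) each $\varphi_i(X)$ is coprime to $\wh\varphi_i(X)=\prod_{j\ne i}\varphi_j(X)$ in $\R$; moreover $\varphi_i(X)\wh\varphi_i(X)=X^n-\lambda=0$ in $\R$. Applying Lemma~\ref{coprime}(4) to the pair $\varphi_i(X)$, $\wh\varphi_i(X)$ would give that $\R\wh\varphi_i(X)$ is a ring with its own identity (an idempotent of $\R$), that $\wh\varphi_i(X)$ is a unit of this ring (settling part~(3)), and that $\R\wh\varphi_i(X)=\R\wh\varphi_i(X)^2$ — though for the direct sum decomposition \eqref{inner CRT} over all $m$ indices I would argue by induction on $m$, peeling off one factor at a time, or invoke the CRT directly on $R[X]\big/\langle X^n-\lambda\rangle$ using the coprime factorization $X^n-\lambda=\prod_i\varphi_i(X)$.

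Next, to identify each summand I would use the standard ring isomorphism: multiplication by $\wh\varphi_i(X)$, or equivalently the CRT projection, gives $\R\wh\varphi_i(X)\cong R[X]\big/\langle X^n-\lambda,\ \varphi_i(X)\rangle=R[X]\big/\langle\varphi_i(X)\rangle$, the last equality because $\varphi_i(X)\mid X^n-\lambda$ in $R[X]$. Then Lemma~\ref{Galois extension} applies verbatim to $\varphi(X)=\varphi_i(X)$, which is basic irreducible of degree $d_i$: it tells us $R[X]\big/\langle\varphi_i(X)\rangle$ is a finite chain ring with invariants $p, rd_i, s, k, \ell$, whose radical is generated by $\pi$ with nilpotency index $\ell$ and whose residue field is ${\rm GF}(p^{rd_i})$. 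Transporting this isomorphism back, the radical of $\R\wh\varphi_i(X)$ is $\R\pi\wh\varphi_i(X)$ with nilpotency index $\ell$ (part~(1)) and the residue field is ${\rm GF}(p^{rd_i})$ (part~(2)).

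Finally, for the "in particular" clause: $\R$ is now exhibited as a direct sum of finitely many finite chain rings, so by Eq.\eqref{being PIR} (the characterization recalled after Lemma~\ref{Galois extension}) it is a finite principal ideal ring; concretely, any ideal $C$ of $\R$ decomposes as $C=\bigoplus_i \big(C\cap\R\wh\varphi_i(X)\big)$ by Remark~\ref{direct sum ideal}, each intersection is $\R\pi^{e_i}\wh\varphi_i(X)$ for some $0\le e_i\le\ell$ by Eq.\eqref{chain of ideals} applied in the chain ring $\R\wh\varphi_i(X)$, and then $C$ is generated by the single element $\sum_i \pi^{e_i}\wh\varphi_i(X)$.

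The main obstacle I anticipate is purely bookkeeping rather than conceptual: making the passage from "$\varphi_i(X)$ and $\wh\varphi_i(X)$ are coprime with product zero" to the full $m$-fold orthogonal idempotent decomposition \eqref{inner CRT} clean — one must check that the idempotents $\iota_i$ cutting out the summands are mutually orthogonal and sum to $1$, which is exactly the content of CRT for the coprime factorization but needs a careful induction or an explicit Bézout computation; and one must be slightly careful that $\gcd(n,p)=1$ is what guarantees $X^n-\lambda$ has no repeated roots mod $\pi$, so that Hensel's lemma and hence the coprime lift \eqref{X^n-lambda in R} are available in the first place. Everything after that is a routine transport of structure along the isomorphism $\R\wh\varphi_i(X)\cong R[X]\big/\langle\varphi_i(X)\rangle$.
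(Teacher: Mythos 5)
Your proof is correct and follows essentially the same route as the paper: decompose $\R$ via CRT applied to the coprime factorization $X^n-\lambda=\prod_i\varphi_i(X)$, identify each summand with $R[X]/\langle\varphi_i(X)\rangle$ as the image under the natural projection (with $\R\varphi_i(X)$ the kernel), and then transport Lemma~\ref{Galois extension}'s chain-ring structure. The only minor difference is that you cite Lemma~\ref{coprime}(4) directly for part~(3), whereas the paper re-derives the invertibility of $\wh\varphi_i(X)$ in $\R\wh\varphi_i(X)$ by an explicit idempotent computation; this is a presentational rather than substantive difference.
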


\begin{proof}
By Chinese Remainder Theorem, the natural homomorphism 
\begin{equation}\label{nat homo}
\begin{array}{ccccc}
 R[X]&\longrightarrow& R[X]\big/R[X]\varphi_1(X)
   ~~\oplus~\cdots~\oplus~~ R[X]\big/R[X]\varphi_m(X)\\[1pt]
   f(X)&\longmapsto &\big(\, f(X)~({\rm mod}\;\varphi_1(X))~, 
   ~~~ \cdots ~,~~~ f(X)~({\rm mod}\;\varphi_m(X))\,\big) 
\end{array}\end{equation}
induces an isomorphism
\begin{equation}\label{CRT}
\R \cong 
R[X]\big/R[X]\varphi_1(X)\oplus\cdots\oplus R[X]\big/R[X]\varphi_m(X).
\end{equation}
Obviously, $\varphi_i(X)\wh\varphi_i(X)\equiv 0\!\pmod{X^n-\lambda}$.
By Lemma \ref{coprime}(1), $\varphi_i(X)$ and $\wh\varphi_i(X)$ are coprime; 
and by Lemma \ref{coprime}(4),
$$\R =\R \wh\varphi_i(X)\oplus \R \varphi_i(X).$$
And  $\R \varphi_i(X)$ is the kernel of the natural
homomorphism 
$$\R~\rightarrow ~R[X]\big/R[X]\varphi_i(X), ~~ 
 f(X)~({\rm mod}\,X^n\!-\!\lambda)~ 
  \mapsto~ f(X)~({\rm mod}\,\varphi_i(X)).$$
Thus the natural homomorphism \eqref{nat homo} induces an isomorphism of rings
\begin{equation}\label{inn-out}
\begin{array}{ccc}
 \R \wh\varphi_i(X) &\cong & R[X]/\langle\varphi_i(X)\rangle,\\[1pt]
  f(X)\wh\varphi_i(X)~({\rm mod}~X^n-\lambda)&\longmapsto&
  f(X)\wh\varphi_i(X)~({\rm mod}~\varphi_i(X)).
\end{array}
\end{equation} 
Then the isomorphism \eqref{CRT} corresponds to 
the inner direct decomposition~\eqref{inner CRT}.

(1) and (2).~ By Eq.\eqref{inn-out} and Lemma \ref{Galois extension}, 
$\R \wh\varphi_i(X)$ is a finite chain ring
with radical $J(\R \wh\varphi_i(X))=\R \pi\wh\varphi_i(X)$
of nilpotency index $\ell$, and the residue field 
$$\R \wh\varphi_i(X)\big/\R \pi\wh\varphi_i(X)
   \cong F[X]\big/\langle\ol\varphi_i(X) \rangle={\rm GF}(p^{rd_i}).$$

(3).~ Let $\iota_i(X)$ be the identity of the ring $\R \wh\varphi_i(X)$
(in particular, $\iota_i(X)$ is an idempotent, see Remark \ref{direct sum ideal}).
Note that 
$\wh\varphi_i(X)=1_\R \wh\varphi_i(X)\in \R \wh\varphi_i(X)$.
Since $\iota_i(X)\in \R \wh\varphi_i(X)$,
there is a $\psi_i(X)\in\R $ such that
$\iota_i(X)=\psi_i(X)\cdot\wh\varphi_i(X)$. Thus
\begin{equation}\label{idempotent}
\iota_i(X)=\iota_i(X)^2=\iota_i(X)\psi_i(X)\cdot\wh\varphi_i(X).
\end{equation}
That is, $\wh\varphi_i(X)$ is invertible in the ring $\R \wh\varphi_i(X)$
since $\iota_i(X)\psi_i(X)\in \R \wh\varphi_i(X)$.
\end{proof}

\begin{remark}\rm
By Eq.\eqref{inn-out}, it is easy to get the identity
$\iota_i(X)$ of the ring $\R \wh\varphi_i(X)$ as follows:
$\psi_i(X)\wh\varphi_i(X)+\chi_i(X)\varphi_i(X)=1$
for some $\psi_i(X),\chi_i(X)\in R[X]$
(since $\varphi_i(X)$ and $\wh\varphi_i(X)$ are coprime);
then
$\iota_i(X):=\psi_i(X)\wh\varphi_i(X)\in\R \wh\varphi_i(X)$,
which is mapped by Eq.\eqref{inn-out} to 
the identity of $R[X]/\langle\varphi_i(X)\rangle$.
\end{remark}

\begin{corollary}\label{constacyclic cor1}
Keep the notation in Theorem \ref{constacyclic}. All ideals $C$ of $\R $ 
are one to one corresponding to the integer sequences $(e_1,\cdots,e_m)$,
$0\le e_i\le\ell$ for $i\in I$, such that
\begin{equation}\label{ideals of R}
 C=\R \pi^{e_1}\wh\varphi_1(X)\oplus\cdots\oplus\R \pi^{e_m}\wh\varphi_m(X)
 =\big\langle\pi^{e_1}\wh\varphi_1(X)+\cdots+\pi^{e_m}\wh\varphi_m(X)\big\rangle,
\end{equation}
with $|C|=p^{r(m\ell-d_1e_1-\cdots-d_me_m)}$; 
and the total number of ideals of $\R $ is $(\ell+1)^m$.
\end{corollary}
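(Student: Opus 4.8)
The plan is to read everything off from the ring-theoretic decomposition already established in Theorem~\ref{constacyclic}. Write $S_i=\R\wh\varphi_i(X)$ for $i\in I$, and let $\iota_i(X)$ be the identity (idempotent) of the ring $S_i$, so that $\R=S_1\oplus\cdots\oplus S_m$ with pairwise orthogonal idempotents $\iota_i(X)$, where each $S_i$ is a finite chain ring with $J(S_i)=\R\pi\wh\varphi_i(X)$ of nilpotency index $\ell$, with residue field ${\rm GF}(p^{rd_i})$, and with $\wh\varphi_i(X)$ a unit of $S_i$. Given an ideal $C$ of $\R$, applying Remark~\ref{direct sum ideal} to the idempotents $\iota_i(X)$ one at a time gives $C=\bigoplus_{i\in I}(C\cap S_i)$ with $C\cap S_i=\iota_i(X)C$; and since $C\cap S_i$ is an ideal of $\R$ contained in $S_i$, it is in particular an ideal of the chain ring $S_i$.

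Next I would pin down each summand $C\cap S_i$. Reading Eq.~\eqref{chain of ideals} inside the chain ring $S_i$, the ideals of $S_i$ are exactly the $\ell+1$ pairwise distinct ideals $S_i(\pi\wh\varphi_i(X))^{e}$, $0\le e\le\ell$. Since $\wh\varphi_i(X)$ is a unit of $S_i$, for each $e$ one has $S_i(\pi\wh\varphi_i(X))^{e}=S_i\pi^{e}\wh\varphi_i(X)^{e}=S_i\pi^{e}=\R\pi^{e}\wh\varphi_i(X)$; so there is a unique $e_i$ with $0\le e_i\le\ell$ and $C\cap S_i=\R\pi^{e_i}\wh\varphi_i(X)$. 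Summing over $i\in I$ yields the first equality of \eqref{ideals of R}, and the map $C\mapsto(e_1,\dots,e_m)$ is well defined and injective, because $C$ recovers the summands $C\cap S_i=\R\pi^{e_i}\wh\varphi_i(X)=S_i\pi^{e_i}$, which in turn determine the $e_i$ by the distinctness just noted. Conversely, for every sequence $(e_1,\dots,e_m)$ with $0\le e_i\le\ell$ the right-hand side of \eqref{ideals of R} is an ideal of $\R$; hence the correspondence is a bijection onto the set of all ideals of $\R$, of cardinality $(\ell+1)^m$ since each $e_i$ varies independently over $\ell+1$ values. For the single-generator form I would either invoke Eq.~\eqref{PIR} (taking $\pi_i=\pi\,\iota_i(X)$ as a generator of $J(S_i)$ and the units $u_i=\wh\varphi_i(X)$), or argue directly: with $g(X)=\sum_{i\in I}\pi^{e_i}\wh\varphi_i(X)$ one has $\iota_j(X)\wh\varphi_i(X)=0$ for $i\ne j$ and $\iota_j(X)\wh\varphi_j(X)=\wh\varphi_j(X)$, so $\iota_j(X)g(X)=\pi^{e_j}\wh\varphi_j(X)$; therefore $\R g(X)\supseteq\R\pi^{e_j}\wh\varphi_j(X)$ for all $j$, hence $\R g(X)\supseteq C$, while $g(X)\in C$ gives $\R g(X)\subseteq C$, so they agree.

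It remains to compute $|C|$. A finite chain ring with residue field $F'$ and radical of nilpotency index $\ell$ has exactly $|F'|$ elements in each quotient of its ideal chain \eqref{chain of ideals} (each such quotient is a one-dimensional $F'$-vector space, since \eqref{chain of ideals} lists all the ideals); hence $|S_i|=(p^{rd_i})^\ell$ and, more generally, $|S_i\pi^{e_i}|=(p^{rd_i})^{\ell-e_i}$ (the subchain from $S_i\pi^{e_i}$ down to $0$ having $\ell-e_i$ quotients). Multiplying over $i\in I$ and using $\sum_{i\in I}d_i=\deg(X^n-\lambda)=n$ gives $|C|=\prod_{i\in I}p^{rd_i(\ell-e_i)}=p^{r(n\ell-d_1e_1-\cdots-d_me_m)}$.

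The argument is essentially bookkeeping on top of Theorem~\ref{constacyclic}; the steps that need a little care are the translation of ``the $e_i$-th power of the maximal ideal of $S_i$'' into the explicit polynomial ideal $\R\pi^{e_i}\wh\varphi_i(X)$ — which is precisely where part~(3) of Theorem~\ref{constacyclic} (that $\wh\varphi_i(X)$ is a unit of $S_i$) enters — and verifying that $C\mapsto(e_1,\dots,e_m)$ is a genuine bijection rather than merely surjective.
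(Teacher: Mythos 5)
Your proof is correct and takes essentially the same route as the paper: the paper's own proof just cites Theorem \ref{constacyclic} together with the general fact Eq.\eqref{PIR} about ideals of a direct sum of finite chain rings, and your argument simply unpacks that fact in the present setting (decomposition of $C$ by the idempotents $\iota_i(X)$, the chain of ideals of each summand, and Theorem \ref{constacyclic}(3) to rewrite the powers of the maximal ideal as $\R\pi^{e}\wh\varphi_i(X)$), so there is no substantive difference in method. One point worth flagging: you also prove the cardinality claim, which the paper's one-line proof does not address, and your computation yields $|C|=p^{r(n\ell-d_1e_1-\cdots-d_me_m)}$ because $d_1+\cdots+d_m=\deg(X^n-\lambda)=n$; this is the correct value (for instance $C=\R$ has $|R|^n=p^{r\ell n}$ elements), so the exponent $m\ell$ printed in the corollary should read $n\ell$ — a typo in the statement, not a gap in your argument.
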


\begin{proof}
By Theorem \ref{constacyclic} and Eq.\eqref{PIR}, any ideal $C$ of $\R $ 
is characterized by a integer sequence $(e_1,\cdots,e_m)$,
$0\le e_i\le\ell$ for $i=1,\cdots,m$,  as in \eqref{ideals of R}. 
For each $e_i$, there are $\ell+1$ choices. We get $(\ell+1)^m$ ideals.
\end{proof}

Following Eq.\eqref{PIR} which is classical in ring theory,
Eq.\eqref{ideals of R} shows a kind of generators of any ideal~$C$ of~$\R$.
Inspired by the generators of cyclic codes obtained in~\cite{DL},  
we show a kind of generators of the $\lambda$-constacyclic codes  
which is typical in coding theory. 

\begin{theorem}\label{generator of ideal}
Let notation be as above, and $C$ be an ideal of $\R$ as in Eq.\eqref{ideals of R}.
For $v=0,1,\cdots,\ell-1$, define 
$I_v\subseteq I$ by $I_v=\{i\mid i\in I,\; e_i\le v\}$. 
Then we get a sequence of subsets 
$I_0\subseteq I_1\subseteq\cdots\subseteq I_{\ell-1}\subseteq I$,
and the corresponding sequence of monic polynomials
$\wh\varphi_{I_0}(X)$, $\wh\varphi_{I_1}(X)$, 
$\cdots$, $\wh\varphi_{I_{\ell-1}}(X)$
satisfy the following three:

{\bf(1)}~
$\wh\varphi_{I_0}(X)\,\big|\,(X^n-\lambda)$, and
$\wh\varphi_{I_v}(X)\,\big|\,\wh\varphi_{I_{v-1}}(X)$ for $v=1,\cdots, \ell-1$;

{\bf(2)}~
$C\cap\R \pi^v=\big\langle \pi^v\wh\varphi_{I_v}(X)
 +\cdots+\pi^{\ell-1}\wh\varphi_{I_{\ell-1}}(X)  \big\rangle$,
 $v=0,1,\cdots, \ell-1$.

{\bf(3)}~ 
$ \wh\varphi_{I_0}(X)+\pi\wh\varphi_{I_1}(X)
 +\cdots+\pi^{\ell-1}\wh\varphi_{I_{\ell-1}}(X)$
 is a generator of the code $C$. 

\end{theorem}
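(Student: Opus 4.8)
The plan is to pass to the component decomposition \eqref{inner CRT}--\eqref{CRT} and argue coordinatewise in the chain rings $R[X]/\langle\varphi_i(X)\rangle$. Part~(1) is immediate from the definition of the $I_v$: if $e_i\le v-1$ then $e_i\le v$, so $I_{v-1}\subseteq I_v$, whence $I\setminus I_v\subseteq I\setminus I_{v-1}$, and therefore $\wh\varphi_{I_v}(X)=\varphi_{I\setminus I_v}(X)$ divides $\varphi_{I\setminus I_{v-1}}(X)=\wh\varphi_{I_{v-1}}(X)$; similarly $\wh\varphi_{I_0}(X)=\varphi_{I\setminus I_0}(X)$ divides $\varphi_I(X)=X^n-\lambda$.

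For~(2), abbreviate $g_v(X)=\pi^v\wh\varphi_{I_v}(X)+\cdots+\pi^{\ell-1}\wh\varphi_{I_{\ell-1}}(X)$; since $C\cap\R\pi^0=C$, statement~(3) is precisely the $v=0$ instance of~(2), so it is enough to prove~(2). First I would line up three internal direct sums under \eqref{CRT}: multiplying \eqref{inner CRT} by $\pi^v$ gives $\R\pi^v=\bigoplus_{i\in I}\R\pi^v\wh\varphi_i(X)$; by \eqref{ideals of R} we have $C=\bigoplus_{i\in I}\R\pi^{e_i}\wh\varphi_i(X)$; and since by Theorem~\ref{constacyclic} each summand $\R\wh\varphi_i(X)$ is a chain ring of nilpotency index $\ell$ in which $\R\pi^a\wh\varphi_i(X)$ is the ideal at level $a$, intersecting the latter two decompositions termwise yields $C\cap\R\pi^v=\bigoplus_{i\in I}\R\pi^{\max(e_i,v)}\wh\varphi_i(X)$, an ideal which, again by \eqref{ideals of R}, is generated by $\sum_{i\in I}\pi^{\max(e_i,v)}\wh\varphi_i(X)$.

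It then remains to check that $\langle g_v(X)\rangle$ equals this ideal, which I would verify by comparing images under \eqref{CRT}. The key point is that $\wh\varphi_{I_u}(X)=\prod_{j:\,e_j>u}\varphi_j(X)$ reduces modulo $\varphi_i(X)$ to $0$ when $e_i>u$ (then $\varphi_i$ is one of the factors) and to a unit $w_{i,u}$ of $R[X]/\langle\varphi_i(X)\rangle$ when $e_i\le u$ (the remaining factors being each coprime to $\varphi_i$). Hence the image of $g_v(X)$ in $R[X]/\langle\varphi_i(X)\rangle$ is $\sum_{u=\max(v,e_i)}^{\ell-1}\pi^u w_{i,u}$, with leading coefficient $w_{i,\max(v,e_i)}$ a unit, so it equals $\pi^{\max(v,e_i)}$ times a unit (a unit plus a nilpotent), and therefore generates the ideal at level $\max(v,e_i)$ of the chain ring $R[X]/\langle\varphi_i(X)\rangle$ --- read as $0$ in the boundary case $e_i=\ell$, where the sum is empty. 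This is exactly the image under \eqref{CRT} of the $i$-th summand of $C\cap\R\pi^v$ found above; since this matches for every $i$, the two ideals coincide, which proves~(2), and~(3) is the instance $v=0$.

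The only delicate part is the bookkeeping in the last step: keeping the decompositions of $\R$, of $\R\pi^v$ and of $C$ consistent under \eqref{CRT}, and correctly tracking in each chain-ring component the exponent $\max(e_i,v)$, in particular the boundary cases $e_i=\ell$ (hence $\max(v,e_i)=\ell$) where that component collapses to $0$. Everything else --- the divisibility in~(1) and the reduction ``unit plus nilpotent is a unit'' --- is routine.
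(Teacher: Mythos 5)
Your proof is correct, but it takes a genuinely different route from the paper's. The paper proves item~(2) by \emph{downward induction} on $v$, starting from $v=\ell-1$ and going up to $v=0$: assuming the claim at level $v+1$, it first shows $C\cap\R\pi^v=\big\langle\pi^v\wh\varphi_{I_v}(X),\,\pi^{v+1}f_v(X)\big\rangle$ with $f_v(X)=\wh\varphi_{I_{v+1}}(X)+\cdots+\pi^{\ell-v-2}\wh\varphi_{I_{\ell-1}}(X)$, and then collapses this two-generator ideal to a principal one via a telescoping (geometric-series) identity exploiting $\pi^\ell=0$, combined with Lemma~\ref{coprime}(4) and the coprimality of $\wh\varphi_{I_v}(X)$ and $\varphi_{I_v}(X)$ to absorb the $\ell$-th power. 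You bypass the induction and the two-generator intermediate stage entirely: you pass through the CRT isomorphism of Theorem~\ref{constacyclic} once, compute both $C\cap\R\pi^v$ and $\big\langle g_v(X)\big\rangle$ coordinatewise in the chain-ring components $R[X]/\langle\varphi_i(X)\rangle$, and observe that the image of $g_v(X)$ in the $i$-th component is $\pi^{\max(v,e_i)}$ times a unit (a unit plus a nilpotent), which is precisely the $i$-th component of $C\cap\R\pi^v$. Both approaches are sound and rest on the same CRT decomposition, but they emphasize different things: the paper's ``going upstairs'' induction makes visible the filtration $C\supseteq C\cap\R\pi\supseteq\cdots$ and how the generator accrues one $\pi$-layer at a time (which is the structural point the author stresses and uses later in Lemma~\ref{C cap R pi^v}), whereas your coordinatewise verification is more direct, replaces the geometric-series manipulation by the elementary ``unit plus nilpotent'' fact inside each local component, and handles the boundary case $e_i=\ell$ (empty sum, component collapses to zero) transparently, at the cost of not exhibiting the filtration explicitly.
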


\begin{proof} 
About the notation \eqref{phi_I},
for $I',I''\subseteq I$, it is obvious that:
\begin{align}\label{phi division}
I'\subseteq I'' ~\iff~ \varphi_{I'}(X)\,\big|\,\varphi_{I''}(X)
      ~ \iff~ \wh\varphi_{I''}(X)\,\big|\,\wh\varphi_{I'}(X);
\end{align}
and by  Lemma~\ref{coprime}(2) it is easy to check that,
\begin{align}\label{phi cdot 3}
\R \wh\varphi_{I'}(X)+\R \wh\varphi_{I''}(X)
 =\R \wh\varphi_{I'\cup I''}(X). 
\end{align}

Obviously, $I_0\subseteq I_1\subseteq\cdots\subseteq I_{\ell-1}\subseteq I$.
By Eq.\eqref{phi division}, (1) holds. 

For (2), we have an observation as follows:
\begin{equation}\label{observe}
\R \pi^{e_i}\wh\varphi_i(X)\cap\R \pi^{v}
=\begin{cases}
\R \pi^{v}\wh\varphi_i(X), & e_i\le v;\\
\R \pi^{e_i}\wh\varphi_i(X), & e_i> v.
\end{cases}
\end{equation}
For $v=\ell-1$,
$$\textstyle
C\cap\R \pi^{\ell-1}=
\bigoplus_{i\in I_{\ell-1}}\R \pi^{\ell-1}\wh\varphi_i(X)
=\pi^{\ell-1}\big(\bigoplus_{i\in I_{\ell-1}}\R \wh\varphi_i(X)\big).
$$
By Eq.\eqref{phi cdot 3}, 
$\bigoplus_{i\in I_{\ell-1}}\R \wh\varphi_i(X)
 =\R \wh\varphi_{I_{\ell-1}}(X)$. So
$$\textstyle
C\cap\R \pi^{\ell-1}=
 \pi^{\ell-1}\R \wh\varphi_{I_{\ell-1}}(X)=
 \R \pi^{\ell-1}\wh\varphi_{I_{\ell-1}}(X).
$$
Thus, (2) holds for $v=\ell-1$. 

Next assume that $0\le v<\ell-1$ and
(2) holds for $v+1$, i.e., 
$$C\cap\R \pi^{v+1}
 =\big\langle \pi^{v+1}\wh\varphi_{I_{v+1}}(X)+\cdots+
   \pi^{\ell-1}\wh\varphi_{I_{\ell-1}}(X)\big\rangle
   =\big\langle  \pi^{v+1} f_v(X)\big\rangle,
$$
where $f_v(X)=\wh\varphi_{I_{v+1}}(X)+\cdots+
   \pi^{\ell-1-v-1}\wh\varphi_{I_{\ell-1}}(X)$.
By Eq.\eqref{observe},
$$\textstyle
C\cap\R \pi^v=C\cap\R \pi^{v+1}+
\big(\bigoplus_{i\in I_v}\R \pi^v\wh\varphi_i(X)\big)
=C\cap\R \pi^{v+1} 
 + \pi^v\big(\bigoplus_{i\in I_v}\R \wh\varphi_i(X)\big)
$$
By Eq.\eqref{phi cdot 3} again, 
$\pi^v\big(\bigoplus_{i\in I_v}\wh\varphi_i(X)\big)
=\pi^v\R \wh\varphi_{I_v}(X)$.
We obtain
\begin{equation}\label{C cap pi^v}
C\cap\R \pi^v=
\big\langle\pi^v\wh\varphi_{I_v}(X),\;\pi^{v+1}f_v(X)\big\rangle.
\end{equation}
Noting that $\pi^\ell=0$, we have
\begin{align*}
&\big(\wh\varphi_{I_v}\!(X\!)+\pi f_v(X) \big)
\left(\wh\varphi_{I_v}\!(X\!)^{\ell\!-\!1}
 -\wh\varphi_{I_v}\!(X\!)^{\ell\!-\!2}\!\cdot\!\pi f_v(X)+
 \cdots+(\!-\!1)^{\ell\!-\!1}\big(\pi f_v(X)\big)^{\ell\!-\!1}\right)\\
&=\wh\varphi_{I_v}(X)^{\ell} + (-1)^{\ell-1}\!\big(\pi f_v(X)\big)^\ell
 =\wh\varphi_{I_v}(X)^{\ell},
\end{align*}
which implies $\wh\varphi_{I_v}(X)^{\ell}
\in\big\langle \wh\varphi_{I_v}(X)+\pi f_v(X) \big\rangle$.
By Lemma~\ref{coprime}(2) and (4),  
$\wh\varphi_{I_v}(X)$ and $\varphi_{I_v}(X)$ are coprime, 
$\R =\R \wh\varphi_{I_v}(X)\oplus
\R \varphi_{I_v}(X)$,
and $\R \wh\varphi_{I_v}(X)^{\ell}=\R \wh\varphi_{I_v}(X)$. Thus
$$
\wh\varphi_{I_v}(X)\in \R \wh\varphi_{I_v}(X)^{\ell}
\subseteq\big\langle \wh\varphi_{I_v}(X)+\pi f_v(X) \big\rangle.
$$
Hence $\pi^v\wh\varphi_{I_v}(X)
\in\big\langle \pi^v\wh\varphi_{I_v}(X)+\pi^{v+1} f_v(X) \big\rangle$. Consequently,
$$
\big\langle \pi^v\wh\varphi_{I_v}(X),\,\pi^{v+1} f_v(X)\big\rangle=
\big\langle \pi^v\wh\varphi_{I_v}(X)+\pi^{v+1} f_v(X) \big\rangle.
$$
Combining it with Eq.\eqref{C cap pi^v}, we see that (2) hold.

Finally, (3) follows from (2) by taking $v=0$.
\end{proof}

In the proof, we constructed the generators by ``going upstairs''.
The first step ($v=\ell-1$) will play a crucial part in the next subsection.
In the cyclic case, i.e., $\lambda=1$, 
the last step ($v=0$) is just \cite[Theorem 3.6]{DL}. 

Next we show a uniqueness up to some sense of
the generator polynomial and check polynomial of $C$. 
Recall from literature (e.g.,~\cite{JLLX}) that,
if there is an $h(X)\in\R$ such that 
$$
 f(X)\in C~\iff~ f(X)h(X)\equiv 0\!\!\pmod{X^n-\lambda},
\qquad \forall\; f(X)\in\R,
$$
then $C$ is said to be {\em checkable} and 
$h(X)$ is called a {\em check polynomial} of $C$.

\begin{theorem}\label{constacyclic cor2}
Let $C$ be a $\lambda$-constacyclic code of length $n$ over $R$.

{\bf(1)}{\rm(Generator polynomial)}
There is a unique sequence of monic $R$-polynomials
$g_0(X),g_1(X),\cdots,g_{\ell-1}(X)$ satisfying the following two:

\quad{\bf(g1)}
 $g_{0}(X)\,\big|\,(X^n-\lambda)$ and
$g_{v}(X)\,\big|\,g_{v-1}(X)$ for $v=1,\cdots,\ell-1$;

\quad{\bf(g2)}
$
 C=\big\langle g_{0}(X)+\pi g_{1}(X)
 +\cdots+\pi^{\ell-1}g_{\ell-1}(X)  \big\rangle.
$

\smallskip
{\bf(2)}{\rm(Check polynomial)}
There is a unique sequence of monic $R$-polynomials
$h_0(X),h_1(X),\cdots,h_{\ell-1}(X)$ satisfying the following two:

\quad{\bf(h1)}
 $h_{0}(X)\,\big|\,(X^n-\lambda)$ and
$h_{v}(X)\,\big|\,h_{v-1}(X)$ for $v=1,\cdots,\ell-1$;

\quad{\bf(h2)}
$h_{0}(X)+\pi h_{1}(X) +\cdots+\pi^{\ell-1}h_{\ell-1}(X)$
is a check polynomial of $C$.
\end{theorem}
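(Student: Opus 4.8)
The plan is to derive both statements from Theorem~\ref{generator of ideal} and the classification in Corollary~\ref{constacyclic cor1}, after setting up one ``dictionary''. First I would record that the monic divisors of $X^n-\lambda$ in $R[X]$ are exactly the products $\varphi_A(X):=\prod_{i\in A}\varphi_i(X)$ over subsets $A\subseteq I$, giving a bijection between such divisors and subsets of $I$. Indeed, if $g(X)$ is monic with $g(X)h(X)=X^n-\lambda$, comparing leading coefficients makes $h(X)$ monic as well; reducing modulo $\pi$ and using that $F[X]$ is a UFD with the $\ol\varphi_i(X)$ distinct monic irreducibles forces $\ol g(X)=\prod_{i\in A}\ol\varphi_i(X)$ for a unique $A$; then $\ol g(X)$ and $\ol h(X)=\prod_{i\notin A}\ol\varphi_i(X)$ are coprime, and Hensel's Lemma (Lemma~\ref{facts on local}(2)) identifies the lift uniquely as $g(X)=\varphi_A(X)$, $h(X)=\varphi_{I-A}(X)$. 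Consequently condition (g1) is equivalent to: $g_v(X)=\wh\varphi_{I_v}(X)$ for an ascending chain $I_0\subseteq I_1\subseteq\cdots\subseteq I_{\ell-1}$ of subsets of $I$ (write $g_v(X)=\varphi_{A_v}(X)$ with $A_v=I-I_v$, the $A_v$ descending), and likewise (h1) for the $h_v$.

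Existence in part~(1) is then immediate: take $g_v(X)=\wh\varphi_{I_v}(X)$ with $I_v=\{i\in I\mid e_i\le v\}$, so that (g1) and (g2) are parts~(1) and~(3) of Theorem~\ref{generator of ideal}. For uniqueness I would read the construction of Theorem~\ref{generator of ideal} backwards. Given any ascending chain $I_0\subseteq\cdots\subseteq I_{\ell-1}\subseteq I$, put $e_i=\#\{v\in\{0,\dots,\ell-1\}\mid i\notin I_v\}$; since the $I_v$ ascend, $\{v\mid i\notin I_v\}=\{0,\dots,e_i-1\}$, so $i\in I_v\iff v\ge e_i$, i.e.\ the $I_v$ are precisely the sets attached by Theorem~\ref{generator of ideal} to the sequence $(e_1,\dots,e_m)$; hence $\wh\varphi_{I_0}(X)+\pi\wh\varphi_{I_1}(X)+\cdots+\pi^{\ell-1}\wh\varphi_{I_{\ell-1}}(X)$ generates the ideal with exponent sequence $(e_1,\dots,e_m)$. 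Now if $(g_v)$ and $(g_v')$ both satisfy (g1)--(g2) for the same $C$, their associated chains $(I_v)$, $(I_v')$ produce $C$ with exponent sequences $(e_i)$ and $(e_i')$; by the bijection in Corollary~\ref{constacyclic cor1} we get $e_i=e_i'$ for all $i$, and since $I_v=\{i\mid e_i\le v\}$ is recovered from the $e_i$, the chains and hence the polynomials coincide.

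For part~(2) I would pass to the annihilator $\mathrm{Ann}_\R(C)=\{f(X)\in\R\mid f(X)C=0\}$. Using the ring decomposition $\R=\bigoplus_{i\in I}\R\wh\varphi_i(X)$ of Theorem~\ref{constacyclic}, in which each summand is a chain ring of nilpotency index $\ell$ and $C\cap\R\wh\varphi_i(X)=\R\pi^{e_i}\wh\varphi_i(X)$, annihilators are computed componentwise, and in a chain ring $\mathrm{Ann}(\pi^{e_i})=\pi^{\ell-e_i}$; hence $\mathrm{Ann}_\R(C)=\bigoplus_{i\in I}\R\pi^{\ell-e_i}\wh\varphi_i(X)$ is again an ideal as in Corollary~\ref{constacyclic cor1}, and more generally $\mathrm{Ann}_\R(\mathrm{Ann}_\R(D))=D$ for every ideal $D$ of $\R$. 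Applying part~(1) to $\mathrm{Ann}_\R(C)$ produces a unique monic sequence $h_0(X),\dots,h_{\ell-1}(X)$ satisfying (h1) and such that $h(X):=\sum_{v=0}^{\ell-1}\pi^v h_v(X)$ generates $\mathrm{Ann}_\R(C)$. This $h(X)$ is a check polynomial of $C$: for $f(X)\in\R$ one has $f(X)h(X)=0\iff f(X)\in\mathrm{Ann}_\R(\R h(X))=\mathrm{Ann}_\R(\mathrm{Ann}_\R(C))=C$. Conversely, any check polynomial $h(X)$ of $C$ satisfies $C=\mathrm{Ann}_\R(\R h(X))$, so $\R h(X)=\mathrm{Ann}_\R(\mathrm{Ann}_\R(\R h(X)))=\mathrm{Ann}_\R(C)$; thus a sequence $(h_v)$ as in (h1)--(h2) is exactly a generator sequence of the code $\mathrm{Ann}_\R(C)$ in the sense of part~(1), whence its uniqueness follows from what was already proved.

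I expect the only real work to be the backward reading in part~(1): verifying that an arbitrary ascending chain $(I_v)$ really reproduces, via $\sum_v\pi^v\wh\varphi_{I_v}(X)$, the ideal predicted by Theorem~\ref{generator of ideal}, together with a clean statement of the monic-divisor/subset dictionary. Everything else is a direct application of Theorem~\ref{constacyclic}, Theorem~\ref{generator of ideal} and the counting bijection of Corollary~\ref{constacyclic cor1}; there is no genuinely deep step, only bookkeeping.
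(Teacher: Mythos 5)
Your proposal is correct and follows essentially the same route as the paper's own proof: existence and uniqueness in part~(1) via the bijection between ascending chains of subsets of $I$ (equivalently, monic divisor sequences satisfying (g1)) and exponent sequences $(e_1,\dots,e_m)$, combined with Theorem~\ref{generator of ideal} and Corollary~\ref{constacyclic cor1}; and part~(2) by reducing to part~(1) applied to $\mathrm{Ann}_\R(C)$. The only differences are presentational: you spell out the monic-divisor/subset dictionary via Hensel's Lemma where the paper simply cites Lemma~\ref{facts on local}(3), and you supply the double-annihilator argument justifying the equivalence ``check polynomial $\iff$ generator of $\mathrm{Ann}_\R(C)$'' that the paper states without proof.
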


\begin{proof} 
The following proof also shows how to get the generator polynomial 
and the check polynomial exhibited in the theorem from the ideal 
$$
C=\R \pi^{e_1}\wh\varphi_1(X)\oplus\cdots\oplus \R \pi^{e_m}\wh\varphi_m(X)
$$ 
of $\R =R[X]\big/\langle X^n-\lambda\big\rangle$,
see Eq.\eqref{generate C} and Eq.\eqref{check C}.

(1).~ By Theorem \ref{generator of ideal}, the integer sequence $e_1,\cdots,e_m$
determine a unique subset sequence 
$I_0\subseteq I_1\subseteq\cdots\subseteq I_{\ell-1}\subseteq I$, 
and the sequence of polynomials  
\begin{equation}\label{generate C}
 \wh\varphi_{I_0}(X),\wh\varphi_{I_1}(X),\cdots,\wh\varphi_{I_{\ell-1}}(X)
\end{equation}
satisfy the conditions (g1) and (g2).

For the uniqueness, we assume that 
$g_{0}(X),g_{1}(X),\cdots,g_{\ell-1}(X)$ satisfy (g1) and (g2),
and we'll show that $g_v(X)=\wh\varphi_{I_v}(X)$ for $v=0,1,\cdots,\ell-1$.
By Lemma~\ref{facts on local}(3), 
from (g1) we see that any $g_v(X)$ determines a unique subset
$I'_v\subseteq I$ such that $g_v(X)=\wh\varphi_{I'_v}(X)$, $v=0,1,\cdots,\ell-1$.
And by (g1) and Eq.\eqref{phi division},
 $I'_0\subseteq I'_1\subseteq\cdots\subseteq I'_{\ell-1}\subseteq I$,
which determines a unique integer sequence
$e'_1,\cdots,e'_m$, $0\le e'_i\le\ell$, hence 
\begin{align*}
C&=\R \pi^{e_1}\wh\varphi_1(X)\oplus\cdots\oplus\R \pi^{e_m}\wh\varphi_m(X)\\
&=\big\langle g_0(X)+\pi g_1(X)+\cdots+\pi^{\ell-1}g_{\ell-1}(X)\big\rangle\\
&=\big\langle \wh\varphi_{I'_0}(X)+\pi\wh\varphi_{I'_1}(X)
 +\cdots+\pi^{\ell-1}\wh\varphi_{I'_{\ell-1}}(X)\big\rangle\\
 &=\R \pi^{e'_1}\wh\varphi_1(X)\oplus\cdots\oplus\R \pi^{e'_m}\wh\varphi_m(X).
\end{align*}
where the last equality follows from Theorem \ref{generator of ideal}.
So, $e'_i=e_i$ for $i\in I$, hence $I'_v=I_v$ for $v=0,1,\cdots,\ell-1$.
That is, $g_v(X)=\wh\varphi_{I_v}(X)$ for $v=0,1,\cdots,\ell-1$.

\medskip (2).~
Obviously, for $i\in I=\{1,2,\cdots,m\}$,
$$
 {\rm Ann}_{\R \wh\varphi_i(X)}\big(\R \pi^{e_i}\wh\varphi_i(X)\big)
 =\R \pi^{\wh e_i}\wh\varphi_i(X), \qquad \wh e_i=\ell-e_i;
$$
where ${\rm Ann}_S(T):=\{s\in S\,|\,sT=0\}$ 
for any subset $T$ of any ring $S$. Thus
$$
{\rm Ann}_{\R }\big(C\big)=
\R \pi^{\wh e_1}\wh\varphi_1(X)\oplus\cdots\oplus\R \pi^{\wh e_m}\wh\varphi_m(X).
$$
The integer sequence $\wh e_1,\cdots,\wh e_m$ determine the subsets
$\wh I_v=\{i\mid i\in I, \wh e_i\le v\}$, $v=0,1,\cdots,\ell-1$,
such that
$\wh I_0\subseteq \wh I_1\subseteq\cdots \wh I_{\ell-1}\subseteq I$. 
Since
$$
\wh e_i=\ell-e_i\le v ~\iff~ e_i\ge \ell-v 
~\iff~ e_i\notin I_{\ell-v-1} ~\iff~ i\in I-I_{\ell-v-1},
$$
we have that 
$$\wh I_v=I-I_{\ell-v-1},  \quad v=0,1,\cdots,\ell-1;$$
By the definition in Eq.\eqref{phi_I}, 
$\wh\varphi_{I-I_{\ell-v-1}}(X)=\varphi_{I_{\ell-v-1}}(X)$. Thus
$$
\wh\varphi_{\wh I_v}(X)=\wh\varphi_{I-I_{\ell-v-1}}(X)
=\varphi_{I_{\ell-v-1}}(X),\quad v=0,1,\cdots,\ell-1.
$$
By Theorem \ref{generator of ideal}, 
\begin{align*}
{\rm Ann}_{\R }\big(C\big)
&=\big\langle  \wh\varphi_{\wh I_0}(X)+\pi\wh\varphi_{\wh I_1}(X)
 +\cdots+\pi^{\ell-1}\wh\varphi_{\wh I_{\ell-1}}(X)\big\rangle\\
&=\big\langle  \varphi_{I_{\ell-1}}(X)+\pi\varphi_{I_{\ell-2}}(X)
 +\cdots+\pi^{\ell-1}\varphi_{I_{0}}(X)\big\rangle.
\end{align*}
An $R$-polynomial $h(X)$ is a check polynomial of $C$
if and only if $h(X)$ is a generator of ${\rm Ann}_{\R }(C)$. 
By the above (1),
\begin{equation}\label{check C}
\varphi_{I_{\ell-1}}(X), \varphi_{I_{\ell-2}}(X), \cdots, \varphi_{I_{0}}(X),
\end{equation}
is the unique sequence of monic $R$-polynomials 
satisfying (h1) and (h2).
\end{proof}

Theorem~\ref{constacyclic cor2}(2) means that 
all $\lambda$-constacyclic codes are {\em checkable},
the essential reason is that $\R$ is principal
(of course, under the assumption at the beginning of this section).    

\subsection{A BCH bound for constacyclic codes over $R$}
From Theorem \ref{generator of ideal}(2), we have
more observations on $C\cap \R \pi^v$.

\begin{lemma}\label{R iso R'}
Let $0\le v\le\ell-1$, let $R'=R/R\pi^{\ell-v}$ which is a finite chain ring 
with $J(R')=R'\pi'$ of nilpotency index $\ell-v$
($\pi'$ denotes the image in $R'$ of $\pi$). Then
we have an $\R$-isomorphism ($\lambda'$ is the image in $R'$ of $\lambda$):
\begin{equation}\label{R R'}
\R \pi^{v}\cong R'[X]/\langle X^n\!-\!\lambda'\rangle,~~
\pi^v a(X)\,\longmapsto\, a'(X):=a(X)~({\rm mod}~\pi^{\ell-v}),
\end{equation}
which preserves Hamming weight.
\end{lemma}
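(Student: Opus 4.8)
The plan is to exhibit the map in \eqref{R R'} explicitly, check that it is well-defined, a bijection, an $\R$-module homomorphism, and weight-preserving; the last two parts are essentially formal once well-definedness and bijectivity are in place. Let me organize the four verifications.

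First I would set up the map at the level of polynomial rings rather than quotients. Consider the composite $R[X]\to R'[X]\to R'[X]/\langle X^n-\lambda'\rangle$, where the first arrow is reduction of coefficients modulo $\pi^{\ell-v}$. I claim this factors through multiplication by $\pi^v$ followed by passage to $\R\pi^v$. Concretely, the assignment $\pi^v a(X)\mapsto a(X)\bmod\pi^{\ell-v}$ needs a well-definedness check: if $\pi^v a(X)\equiv\pi^v b(X)\pmod{X^n-\lambda}$ in $\R$, i.e. $\pi^v(a(X)-b(X))\in\langle X^n-\lambda\rangle$ inside $R[X]$ (after choosing representatives of degree $<n$, this just says $\pi^v(a(X)-b(X))=0$ in $R[X]$, hence $\pi^v a_i=\pi^v b_i$ for every coefficient), then since $\mathrm{Ann}_R(\pi^v)=R\pi^{\ell-v}$ in the chain ring $R$, we get $a_i\equiv b_i\pmod{\pi^{\ell-v}}$, so $a'(X)=b'(X)$ in $R'[X]$, hence in $R'[X]/\langle X^n-\lambda'\rangle$. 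The same annihilator computation shows injectivity. For surjectivity, any $a'(X)\in R'[X]/\langle X^n-\lambda'\rangle$ is the image of some $a(X)\in R[X]$ under coefficient reduction, and then $\pi^v a(X)\in\R\pi^v$ maps to $a'(X)$.

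Next I would check the module structure and the weight. The ring $\R$ acts on $\R\pi^v$ by multiplication, and on $R'[X]/\langle X^n-\lambda'\rangle$ through the ring surjection $\R\to R'[X]/\langle X^n-\lambda'\rangle$ induced by coefficient reduction (which is clearly a ring homomorphism since $X^n-\lambda$ reduces to $X^n-\lambda'$); compatibility $\,c(X)\cdot\pi^v a(X)=\pi^v\big(c(X)a(X)\big)\mapsto \big(c(X)a(X)\big)'=c'(X)a'(X)$ is then immediate from multiplicativity of coefficient reduction. For the Hamming weight: represent an element of $\R\pi^v$ by $\pi^v a(X)$ with $\deg a(X)<n$; its coefficients are $\pi^v a_0,\dots,\pi^v a_{n-1}$, and the nonzero ones are exactly those indices $i$ with $a_i\notin\mathrm{Ann}_R(\pi^v)=R\pi^{\ell-v}$. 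On the image side the coefficients are $a_0',\dots,a_{n-1}'\in R'=R/R\pi^{\ell-v}$, and $a_i'\ne 0$ precisely when $a_i\notin R\pi^{\ell-v}$. So the supports coincide and ${\rm w_H}$ is preserved.

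The one genuinely delicate point — and the step I expect to be the main obstacle — is the bookkeeping of representatives: making sure that ``$\pi^v a(X)$ with $\deg a(X)<n$'' is an unambiguous normal form for elements of $\R\pi^v$, so that the coefficient-wise annihilator argument is legitimate. This works because $X^n-\lambda$ is monic of degree $n$, so every class in $\R$ (hence in $\R\pi^v$) has a unique representative of degree $<n$, and for such a representative ``$\pi^v a(X)=0$ in $\R$'' is equivalent to ``$\pi^v a(X)=0$ in $R[X]$'' is equivalent to ``$\pi^v a_i=0$ in $R$ for all $i$''. Once this is pinned down, all four verifications reduce to the single fact $\mathrm{Ann}_R(\pi^v)=R\pi^{\ell-v}$, which holds because $R$ is a chain ring with maximal ideal $R\pi$ of nilpotency index $\ell$ (from the chain \eqref{chain of ideals}). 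I would also remark that $R'=R/R\pi^{\ell-v}$ is a chain ring with $J(R')=R'\pi'$ of nilpotency index $\ell-v$, which is immediate from Lemma~\ref{being chain} since $(\pi')^{\ell-v}=0$, $(\pi')^{\ell-v-1}\ne 0$, and $R'/R'\pi'\cong R/R\pi=F$.
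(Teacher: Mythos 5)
Your proposal is correct and is essentially the paper's argument, merely reorganized: the paper composes the two isomorphisms $\R\pi^v \cong \R/\R\pi^{\ell-v} \cong R'[X]/\langle X^n-\lambda'\rangle$ (each coming from a surjection with kernel $\R\pi^{\ell-v}$), while you verify the composite map directly on degree-$<n$ representatives; both hinge on the same fact $\mathrm{Ann}_R(\pi^v)=R\pi^{\ell-v}$, and the weight argument is identical.
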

\begin{proof}
The natural homomorphism 
$R\to R'$, $a\mapsto a'$, where $a'=a\!\pmod{\pi^{\ell-v}}$, 
induces a natural surjective homomorphism: 
$$\textstyle
 \R =R[X]/\langle X^n\!-\!\lambda\rangle
 \to R'[X]/\langle X^n\!-\!\lambda'\rangle,~~ 
 \sum\limits_{i=0}^{n-1}a_iX^i\mapsto \sum\limits_{i=0}^{n-1}a_i'X^i.$$
The kernel of the above homomorphism is 
$\R \pi^{\ell-v}$. We get an isomorphism:
\begin{equation}\label{iso R/pi }
\textstyle
\R /\R \pi^{\ell-v}\cong R'[X]/\langle X^n\!-\!\lambda'\rangle,~
\sum\limits_{i=0}^{n-1}a_iX^i~({\rm mod}~\pi^{\ell-v})\mapsto 
\sum\limits_{i=0}^{n-1}a_i'X^i.
\end{equation}
Next, consider the following surjective $\R $-homomorphism
$$
 \rho_{v}:~
 \R \longrightarrow \R \pi^{v}, 
  ~~ a(X)\longmapsto \pi^{v} a(X).
$$
Obviously, the kernel ${\rm ker}(\rho_{v})=\R \pi^{\ell-v}$, 
hence $\rho_v$ induces an $\R $-isomorphism
$$
\ol\rho_{v}:~ \R \big/\R \pi^{\ell-v} 
 \cong \R \pi^{v},~~ 
 a(X)~({\rm mod}~\pi^{\ell-v})\,\longmapsto\,\pi^v a(X).
$$
Combining it with Eq.\eqref{iso R/pi },
we get the isomorphism \eqref{R R'}.

Let $\pi^v a(X)\in\R\pi^v$ where $a(X)=\sum_{i=0}^{n-1}a_iX^i\in\R$.
Since $\pi^\ell=0$ but $\pi^{\ell-1}\ne 0$, we see that
 $\pi^v a_i=0$ if and only if $a_i\equiv 0~({\rm mod}~\pi^{\ell-v})$.
Thus, the Hamming weight 
$${\rm w_H}\big(\pi^v a(X)\big)={\rm w_H}\big(a'(X)\big),$$
where $a'(X)=\sum_{i=0}^{n-1}a_i'X^i
 \in R'[X]/\langle X^n\!-\!\lambda'\rangle$.
\end{proof}

\begin{lemma}\label{C cap R pi^v}
Let notation be as in Theorem \ref{generator of ideal}
and Lemma \ref{R iso R'}. Then:

{\bf(1)} Through the isomorphism \eqref{R R'},
$C\cap\R \pi^v$ is a 
$\lambda'$-constacyclic code over $R'$ of length $n$ 
with a generator polynomial 
$$\wh\varphi_{I_v}'(X)+\pi\wh\varphi_{I_{v+1}}'(X) 
 +\cdots+\pi^{\ell-v-1}\wh\varphi_{I_{\ell-1}}'(X)\in R'[X],$$  
where $\wh\varphi_{I_j}'(X)$ denotes the image in $R'[X]$ of 
$\wh\varphi_{I_j}(X)$. 

{\bf(2)} The minimum Hamming weight 
${\rm w_H}(C\cap\R \pi^v)={\rm w_H}(C)$.
\end{lemma}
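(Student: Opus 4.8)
The plan is to prove (1) directly from Theorem \ref{generator of ideal}(2) by tracking how the generator of $C\cap\R\pi^v$ transports across the isomorphism \eqref{R R'}, and then to deduce (2) by a squeeze argument using the two extreme values $v=0$ and the weight-preservation already established in Lemma \ref{R iso R'}. First I would recall from Theorem \ref{generator of ideal}(2) that
$$
C\cap\R\pi^v=\big\langle\pi^v\wh\varphi_{I_v}(X)+\pi^{v+1}\wh\varphi_{I_{v+1}}(X)+\cdots+\pi^{\ell-1}\wh\varphi_{I_{\ell-1}}(X)\big\rangle
=\big\langle\pi^v f(X)\big\rangle,
$$
where $f(X)=\wh\varphi_{I_v}(X)+\pi\wh\varphi_{I_{v+1}}(X)+\cdots+\pi^{\ell-v-1}\wh\varphi_{I_{\ell-1}}(X)$. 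The isomorphism $\ol\rho_v\colon\R/\R\pi^{\ell-v}\xrightarrow{\sim}\R\pi^v$ of Lemma \ref{R iso R'} sends $a(X)\bmod\pi^{\ell-v}$ to $\pi^v a(X)$, and the earlier isomorphism \eqref{iso R/pi } identifies $\R/\R\pi^{\ell-v}$ with $R'[X]/\langle X^n-\lambda'\rangle$ coefficientwise; so the composite \eqref{R R'} carries the $\R$-submodule $\R\pi^v f(X)=C\cap\R\pi^v$ onto the ideal of $R'[X]/\langle X^n-\lambda'\rangle$ generated by the image $f'(X)$ of $f(X)$, which is exactly $\wh\varphi_{I_v}'(X)+\pi'\wh\varphi_{I_{v+1}}'(X)+\cdots+(\pi')^{\ell-v-1}\wh\varphi_{I_{\ell-1}}'(X)$ (I should be slightly careful to write $\pi'$ rather than $\pi$ here, although the paper's convention seems to suppress the prime). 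This is a $\lambda'$-constacyclic code over $R'$ of length $n$, and the displayed element is one of its generators; that proves (1).

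For (2), I would argue by double inequality. On one hand $C\cap\R\pi^v\subseteq C$, so every nonzero word of $C\cap\R\pi^v$ is a nonzero word of $C$, giving ${\rm w_H}(C\cap\R\pi^v)\ge{\rm w_H}(C)$. On the other hand, take a nonzero $c(X)\in C$ achieving ${\rm w_H}(C)$. Since $C=\R\pi^{e_1}\wh\varphi_1\oplus\cdots\oplus\R\pi^{e_m}\wh\varphi_m$ with each $e_i\le\ell-1$ (otherwise that summand vanishes, and if $C\ne 0$ at least one summand is nonzero), the element $\pi^{\ell-1}c(X)$ lies in $C\cap\R\pi^{\ell-1}\subseteq C\cap\R\pi^v$ for every $v\le\ell-1$; and since $c(X)\ne 0$ in $C$ while $C$ has no element killed by $\pi^{\ell-1}$ other than those already in $\R\pi$ — more precisely, writing $c(X)=\pi^{e}u(X)$-type components, one checks $\pi^{\ell-1}c(X)\ne 0$ because the lowest $\pi$-power occurring in $c(X)$ across the summands is at most $\ell-1$ — the word $\pi^{\ell-1}c(X)$ is a nonzero element of $C\cap\R\pi^v$ whose support is contained in that of $c(X)$, hence of Hamming weight at most ${\rm w_H}(C)$. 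Combining the two inequalities gives ${\rm w_H}(C\cap\R\pi^v)={\rm w_H}(C)$.

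The step I expect to need the most care is the claim that the minimum-weight codeword $c(X)$ can be multiplied by $\pi^{\ell-1}$ without becoming zero. The cleanest way to handle it is to use the decomposition $c(X)=\sum_{i}c_i(X)$ with $c_i(X)\in\R\pi^{e_i}\wh\varphi_i(X)$: if $c(X)\ne 0$ then some $c_i(X)\ne 0$, and since $\wh\varphi_i(X)$ is a unit in the chain ring $\R\wh\varphi_i(X)$ (Theorem \ref{constacyclic}(3)) of nilpotency index $\ell$, one has $c_i(X)=\pi^{t}\cdot(\text{unit of }\R\wh\varphi_i(X))$ with $e_i\le t\le\ell-1$, whence $\pi^{\ell-1-t}c_i(X)\ne 0$ and a fortiori one can reach a nonzero element of $C\cap\R\pi^{\ell-1}$ supported inside $\mathrm{supp}\,c(X)$ by an appropriate power of $\pi$ (taking the largest $t$ over the nonzero components ensures simultaneity). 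Once this non-vanishing is secured the rest is formal, and weight-preservation of \eqref{R R'} from Lemma \ref{R iso R'} then transfers the statement to the $R'$-picture if desired, though for (2) as stated it is not strictly needed.
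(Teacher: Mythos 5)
Your part (1) is correct and is exactly the paper's argument: take the generator $\pi^v\big(\wh\varphi_{I_v}(X)+\pi\wh\varphi_{I_{v+1}}(X)+\cdots+\pi^{\ell-v-1}\wh\varphi_{I_{\ell-1}}(X)\big)$ of $C\cap\R\pi^v$ from Theorem \ref{generator of ideal}(2) and transport it through the isomorphism \eqref{R R'}.

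In part (2) your overall strategy (the trivial inclusion $C\cap\R\pi^v\subseteq C$ for one inequality, and multiplying a codeword by a power of $\pi$ so that it lands in $C\cap\R\pi^v$ without dying, for the other) is also the paper's, but the key step is flawed as written. You correctly suspect that $\pi^{\ell-1}c(X)$ may be zero (it is, e.g., whenever $c(X)\in\R\pi$ and $\ell\ge 2$), but your repair picks the wrong extremum. If $t_i$ denotes the $\pi$-valuation of the nonzero components (or coefficients) of $c(X)$ and you multiply by $\pi^{\ell-1-t}$ with $t=\max_i t_i$, then the component realizing the maximum survives with valuation $\ell-1$, but every component with $t_i<t$ ends up with valuation $\ell-1-t+t_i<\ell-1$; hence $\pi^{\ell-1-t}c(X)$ is nonzero yet in general does \emph{not} lie in $\R\pi^{\ell-1}$, nor even in $\R\pi^v$ for the given $v$. (Concretely: $\ell=2$, $v=1$, $c$ with one component of valuation $0$ and one of valuation $1$ gives $t=1$ and $\pi^{\ell-1-t}c(X)=c(X)\notin\R\pi$.) The correct choice is $k=\ell-1-\min_i t_i$; simplest is to argue coefficientwise, as the paper implicitly does: let $t$ be the minimal $\pi$-valuation among the nonzero coefficients $c_j$ of $c(X)$ and set $k=\ell-1-t$. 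Then every coefficient of $\pi^kc(X)$ has valuation at least $\ell-1$, and those of valuation exactly $t$ survive, so $0\ne\pi^kc(X)\in C\cap\R\pi^{\ell-1}\subseteq C\cap\R\pi^v$, while $\pi^kc_j\ne 0$ forces $c_j\ne 0$, so ${\rm w_H}(\pi^kc(X))\le{\rm w_H}(c(X))$. With this correction (applied either to a minimum-weight word, as you do, or to an arbitrary nonzero word, as the paper does) your proof of (2) coincides with the paper's.
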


\begin{proof}
(1). By Theorem \ref{generator of ideal}(2),  
$$C\cap\R \pi^{v}=
\big\langle\pi^v\big(\wh\varphi_{I_v}(X)+\pi\wh\varphi_{I_{v+1}}(X) 
 +\cdots+\pi^{\ell-v-1}\wh\varphi_{I_{\ell-1}}(X)\big)\big\rangle.$$
Through the isomorphism \eqref{R R'}, $C\cap \R \pi^v$
is an ideal of $R'[X]/\langle X^n-\lambda'\rangle$
with the generator polynomial stated in (1).

(2). Since $C\cap \R \pi^v\subseteq C$,
${\rm w_H}(C)\le {\rm w_H}(C\cap \R \pi^v)$.
For the other hand, let $c(X)=c_0+c_1X+\cdots+c_{\ell-1}X^{\ell-1}\in C$ 
be any non-zero code word. 
There is an integer $k\ge 0$ such that $0\ne\pi^kc(X)\in\R \pi^v$.
Thus $0\ne \pi^kc(X)\in C\cap\R \pi^v$,
hence ${\rm w_H}(\pi^kc(X))\ge{\rm w_H}(C\cap\R \pi^v)$.
If $\pi^kc_j\ne 0$ then $c_j\ne 0$ obviously.
That is, ${\rm w_H}(c(X))\ge {\rm w_H}(\pi^kc(X))$.
We obtain that ${\rm w_H}(c(X))\ge {\rm w_H}(C\cap\R \pi^v)$.
In conclusion, ${\rm w_H}(C)\ge{\rm w_H}(C\cap\R \pi^v)$.
\end{proof}

\begin{remark}\label{q-cosets}\rm
Let us recall how to get the decomposition 
of $X^n-\ol\lambda$ in \eqref{X^n-lambda in F}; 
see \cite{CDFL} for details please. 
Assume that $t={\rm ord}_{F^\times}(\ol\lambda)$ 
is the order of $\ol\lambda$ in the multiplicative unit group $F^\times$.
In a large enough extension of $F$ we have a primary $tn$-th root $\xi$ of unity 
such that $\xi^n=\ol\lambda$. 
As before, $\Z_{tn}$ denotes the integer residue ring modulo $tn$.
The roots of $X^n-\ol\lambda$ are 
one to one corresponding to the points of the subset
$$1+t\Z_{tn}:=\{1+tk\mid k\in\Z_{tn}\}
 =\{1,1+t,\cdots,1+(n-1)t\}\subseteq\Z_{tn},
$$
 i.e.,
$$\textstyle
X^n-\ol\lambda=\prod\limits_{j\in(1+t\Z_{tn})}(X-\xi^j) ~~~~
 \mbox{\big(in $\tilde F[X]$, where $\tilde F=F(\xi)$\big).}
$$
Recall that $q=p^r=|F|$. Since $t\mid q-1$, $q\in 1+t\Z_{tn}$ and
$1+t\Z_{tn}$ is partitioned in to $q$-cosets $Q_1,\cdots,Q_m$. 
Then in $F[X]$ we have the irreducible decomposition:
$$\textstyle
X^n-\ol\lambda=\ol\varphi_1(X)\cdots\ol\varphi_m(X),\quad
 \mbox{where }~ \ol\varphi_i(X)=\prod\limits_{j\in Q_i}(X-\xi^j).
$$
For any monic $R$-polynomial $g(X)\mid X^n-\lambda$,
by Lemma \ref{facts on local}(3) 
there is a unique $q$-invariant subset $Q\subseteq(1+t\Z_{tn})$
(i.e., $Q$ is a union of some of $Q_1,\cdots,Q_m$) such that
$\ol g(X)=\prod_{j\in Q}(X-\xi^j)$.
\end{remark}

\begin{definition}\rm
Let $C\subseteq \R $ be an ideal, and 
$g_0(X)+\pi g_1(X)+\cdots+\pi^{\ell-1} g_{\ell-1}(X)$
be the generator of $C$ satisfying (g1) and (g2) of Theorem \ref{constacyclic cor2}.
Let $Q\subseteq(1+t\Z_{tn})$ be the 
corresponding $q$-invariant subset of $g_{\ell-1}(X)$.
We call $Q$ the {\em residue zero set} of 
the $\lambda$-constacyclic code $C$ over $R$.

For any ideal $C$ described in Eq.\eqref{ideals of R}, 
by Theorem \ref{generator of ideal}(3) and its notation,  
$g_{\ell-1}(X)=\wh\varphi_{I_{\ell-1}}(X)=\varphi_{I-I_{\ell-1}}(X)$, 
where $I_{\ell-1}=\{i\,|\, 1\le i\le m,\;e_i\le\ell-1\}$, hence
$I-I_{\ell-1}=\{i\,|\, 1\le i\le m,\;e_i=\ell\}$.
Thus, the residue zero set of the $\lambda$-constacyclic code $C$ 
is: $Q=\bigcup_{e_i=\ell} Q_i$. 
\end{definition}

\begin{theorem}[BCH Bound]\label{BCH bound}
Let $C$ be a non-zero $\lambda$-constacyclic code over~$R$ of length~$n$,
and $Q$ be the residue zero set of $C$. If
$Q$ contains $d$ consecutive points 
(consecutive in $1+t\Z_{tn}$, not in $\Z_{tn}$),
then ${\rm w_H}(C)\ge d+1$.
\end{theorem}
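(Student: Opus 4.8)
The plan is to reduce the statement to the classical BCH bound over a finite field by exploiting the structural results already established, in particular Theorem~\ref{generator of ideal}(2) and the weight-preserving isomorphism of Lemma~\ref{R iso R'}. First I would observe that, by Lemma~\ref{C cap R pi^v}(2), $\mathrm{w_H}(C)=\mathrm{w_H}(C\cap\mathscr{R}\pi^{\ell-1})$, so it suffices to bound the minimum weight of $C\cap\mathscr{R}\pi^{\ell-1}$. Now $\mathscr{R}\pi^{\ell-1}$ is, via the isomorphism~\eqref{R R'} with $v=\ell-1$ (so $R'=R/R\pi=F$), identified weight-preservingly with $F[X]/\langle X^n-\bar\lambda\rangle$; and under this identification, by Lemma~\ref{C cap R pi^v}(1), $C\cap\mathscr{R}\pi^{\ell-1}$ corresponds to the ideal of $F[X]/\langle X^n-\bar\lambda\rangle$ generated by $\bar g_{\ell-1}(X)=\wh\varphi_{I_{\ell-1}}(X)\bmod\pi$, i.e. by $\varphi_{I-I_{\ell-1}}(X)$ reduced mod $\pi$. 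So the problem becomes: bound the minimum Hamming weight of the $\bar\lambda$-constacyclic code over $F$ whose check polynomial-side zero structure is exactly the residue zero set $Q=\bigcup_{e_i=\ell}Q_i$.

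The second step is to make the zero set explicit in the style of Remark~\ref{q-cosets}. Choosing $t=\mathrm{ord}_{F^\times}(\bar\lambda)$, a primitive $tn$-th root of unity $\xi$ with $\xi^n=\bar\lambda$, and writing $X^n-\bar\lambda=\prod_{j\in 1+t\mathbb{Z}_{tn}}(X-\xi^j)$ over $\tilde F=F(\xi)$, the generator polynomial $g_{\ell-1}(X)=\varphi_{I-I_{\ell-1}}(X)$ has exactly the roots $\{\xi^j\mid j\in Q\}$. A codeword $c(X)=\sum_{i=0}^{n-1}c_iX^i$ of the $F$-code $\langle g_{\ell-1}(X)\rangle\subseteq F[X]/\langle X^n-\bar\lambda\rangle$ is precisely a polynomial of degree $<n$ divisible by $g_{\ell-1}(X)$ in $F[X]$, equivalently one satisfying $c(\xi^j)=0$ for all $j\in Q$. (This is where I use that the code over $F$ is honestly $\langle g_{\ell-1}\rangle$ with $g_{\ell-1}\mid X^n-\bar\lambda$, so its word set is $\{a(X)g_{\ell-1}(X)\bmod(X^n-\bar\lambda)\}$, which as a set of polynomials of degree $<n$ equals the multiples of $g_{\ell-1}(X)$ of degree $<n$.)

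The third step is the standard Vandermonde argument. Suppose $Q$ contains $d$ consecutive points in $1+t\mathbb{Z}_{tn}$, say $b,b+t,b+2t,\dots,b+(d-1)t$ with each $b+\mu t$ taken mod $tn$. If some nonzero $c(X)$ in the code had weight $\le d$, write its support as $\{i_1,\dots,i_w\}$ with $w\le d$ and nonzero coefficients $c_{i_1},\dots,c_{i_w}$. The conditions $c(\xi^{b+\mu t})=0$ for $\mu=0,1,\dots,w-1$ give a homogeneous linear system in $(c_{i_1},\dots,c_{i_w})$ whose coefficient matrix is $\bigl(\xi^{(b+\mu t)i_\nu}\bigr)_{0\le\mu\le w-1,\,1\le\nu\le w}$; factoring $\xi^{b i_\nu}$ out of column $\nu$ leaves a Vandermonde matrix in the nodes $\xi^{t i_1},\dots,\xi^{t i_w}$, which are distinct because $\xi^t$ has order $n$ and $0\le i_\nu<n$. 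Hence the matrix is invertible, forcing all $c_{i_\nu}=0$, a contradiction. Therefore every nonzero codeword of the $F$-code has weight $\ge d+1$, i.e. $\mathrm{w_H}(C\cap\mathscr{R}\pi^{\ell-1})\ge d+1$, and by Lemma~\ref{C cap R pi^v}(2) this gives $\mathrm{w_H}(C)\ge d+1$.

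I expect the only genuinely delicate point to be bookkeeping rather than mathematics: making sure that "consecutive in $1+t\mathbb{Z}_{tn}$" translates correctly into "$d$ consecutive powers of $\xi$ along the arithmetic progression of common difference $\xi^t$", and that the reduction mod $\pi$ in Lemma~\ref{C cap R pi^v}(1) really yields the code $\langle g_{\ell-1}(X)\bmod\pi\rangle$ with $g_{\ell-1}(X)\bmod\pi$ still a divisor of $X^n-\bar\lambda$ having the claimed root set $Q$ — both of which follow directly from Remark~\ref{q-cosets} and Lemma~\ref{facts on local}(3). Once those identifications are in place, the Vandermonde step is entirely routine.
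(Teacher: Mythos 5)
Your proposal is correct and follows essentially the same route as the paper: reduce to the field case by observing $\mathrm{w_H}(C)=\mathrm{w_H}(C\cap\mathscr{R}\pi^{\ell-1})$, identify $C\cap\mathscr{R}\pi^{\ell-1}$ via Lemma~\ref{R iso R'} and Lemma~\ref{C cap R pi^v}(1) with the $F$-code generated by $\bar g_{\ell-1}(X)$ whose zero set is $Q$, and then apply the classical Vandermonde argument. The only cosmetic difference is that you form a $w\times w$ Vandermonde from the support of a putative low-weight word, whereas the paper shows every $d\times d$ submatrix of the evaluation matrix is nonsingular; these are equivalent formulations of the same step.
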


\begin{proof} 
First we assume that $\ell=1$, i.e. $R=F$ is a field.
Then the theorem is just the known BCH bound for constacyclic codes
over finite fields (e.g., \cite{KS}). For convenience we sketch a proof.
For any $j\in Q$, $\xi^j$  is a root of any code word 
$c(X)=c_0+c_1X+\cdots+c_{n-1}X^{n-1}\in C$, i.e., 
$c_0\xi^{0j}+c_1\xi^j+\cdots+c_{n-1}\xi^{(n-1)j}=0$.
Suppose that $Q$ contains $1+tk, 1+t(k+1),\cdots, 1+t(k+d-1)$.
If $0\ne c_0+c_1X+\cdots+c_{n-1}X^{n-1}\in C$, then
$$
\begin{pmatrix}
1 &\xi^{1+kt}&\cdots&\xi^{(1+kt)(n-1)}\\
1 &\xi^{1+(k+1)t}&\cdots&\xi^{(1+(k+1)t)(n-1)}\\
\cdots &\cdots&\cdots&\cdots\\
1 &\xi^{1+(k+d-1)t}&\cdots&\xi^{(1+(k+d-1)t)(n-1)}
\end{pmatrix}
\begin{pmatrix}c_0\\ c_1\\ \vdots\\ c_{n-1} \end{pmatrix}=0
$$
For $0\le j_1<\cdots<j_d\le n-1$, the determinant of 
the $d\times d$ submatrix consisting of the $j_i$-th columns,  $i=1,\cdots,d$,
of the coefficient matrix is
$$
\det
\begin{pmatrix}
\xi^{(1+kt)j_1}&\cdots&\xi^{(1+kt)j_d}\\
\xi^{(1+(k+1)t)j_1}&\cdots&\xi^{(1+(k+1)t)j_d}\\
\cdots &\cdots&\cdots\\
\xi^{(1+(k+d-1)t)j_1}&\cdots&\xi^{(1+(k+d-1)t)j_d}
\end{pmatrix}
$$
The entries in the $i$'th column have $\xi^{(1+kt)j_{i}}$ in common.
So the determinant is
$$
\xi^{(1+kt)j_1}\cdots \xi^{(1+kt)j_d}\cdot
\det
\begin{pmatrix}
1 &\cdots& 1\\
\xi^{tj_1}&\cdots&\xi^{tj_d}\\
\cdots &\cdots&\cdots\\
\xi^{tj_1(d-1)}&\cdots&\xi^{tj_d(d-1)}
\end{pmatrix}\ne 0.
$$
Therefore, in the non-zero code word $(c_0,c_1,\cdots,c_{n-1})$,
there are at least $d+1$ non-zero entries.  

Next, assume that $\ell\ge 1$. 
In Lemma~\ref{C cap R pi^v}(2), taking ${v=\ell-1}$, 
we get that ${\rm w_H}(C)={\rm w_H}(C\cap \R\pi^{\ell-1})$.
And, $R/R\pi^{\ell-(\ell-1)}=R/\pi=F$ is a field.
By Lemma~\ref{C cap R pi^v}(1),  
$C\cap \R\pi^{\ell-1}$ can be regarded as an ideal
of $F[X]/\langle X^n-\overline\lambda\rangle$
with generator polynomial $\overline g_{\ell-1}(X)\in F[X]$ 
which has zero set~$Q$. By the conclusion for finite fields proved above, 
${\rm w_H}(C\cap\R\pi^{\ell-1})\ge d+1$.
We get the desired result ${\rm w_H}(C)\ge d+1$.
\end{proof}

\subsection{An example}
We conclude this section by an example to illustrate how to get 
the decomposition of $X^n-\lambda$, the generator polynomial, 
the check polynomial and the BCH bound 
of a $\lambda$-constacyclic code $C$ of length $n$ over $R$.

\begin{example}\rm
Take $R=\Z_{5^2}$, i.e., $\pi=p=5$, $\ell=2$ and $r=1$; hence $F=\ol R=\Z_5$, $q=5$.
Take $n=6$, $\lambda=4$. Then $\ol\lambda=-1$, 
$t={\rm ord}_{\Z_5^\times}(-1)=2$, $tn=12$. 
In $\Z_{12}$,
$$
 1+t\Z_{tn}=\{1,3,5,7,9,11\}
 =Q_1\cup Q_2\cup Q_3\cup Q_4, 
$$
$$
 Q_1=\{1,5\},~~ Q_2=\{3\},~~ Q_3=\{7,11\}.~~ Q_4=\{9\}.
$$
In $F[X]$,
\begin{align*}
X^6+1&=(X^3-3)\cdot(X^3-2)=(X^3-2^3)\cdot(X^3-3^3)\\
&=(X-2)(X^2+2X+4)\cdot(X-3)(X^2+3X+4). 
\end{align*}
Then $X^2+3X+4=X^2-2X-1$ is irreducible over $F$, 
and ${\rm GF}(5^2)=F(\xi)$, where $\xi^2-2\xi-1=0$.
Then $\xi^3=2$ and
\begin{align*}
\ol\varphi_1(X)=X^2-2X-1,~~ \ol\varphi_2(X)=X-2,\\ 
\ol\varphi_3(X)=X^2+2X-1,~~ \ol\varphi_4(X)=X-3.   
\end{align*}
In $\Z_{5^2}[X]$, %By Hensel's lifting, 
\begin{align*}
X^6-\lambda&=X^6-4=(X^3+2)(X^3-2)=(X^3+3^3)(X^3-3^3)\\
&=(X+3)(X^2-3X+9)(X-3)(X^2+3X+9).
\end{align*}
We see that
\begin{align*}
\varphi_1(X)=X^2+3X+9,~~ \varphi_2(X)=X+3,\\
\varphi_3(X)=X^2-3X+9,~~ \varphi_4(X)=X-3.
\end{align*}
Thus
$$\R =\Z_{5^2}[X]\big/\langle X^6-4\rangle
=\R \wh\varphi_1(X)\oplus\R \wh\varphi_2(X)
\oplus\R \wh\varphi_3(X)\oplus\R \wh\varphi_4(X).
$$
The four ideals are all Galois rings: 
$$\begin{array}{c}
\R \wh\varphi_1(X)\cong \R \wh\varphi_3(X)\cong{\rm GR}(5^2,2),\\[3pt]
\R \wh\varphi_2(X)\cong \R \wh\varphi_4(X)\cong{\rm GR}(5^2,1)=\Z_{5^2}.
\end{array}$$
Take
$$
C=
\R \cdot 5\wh\varphi_3(X)\oplus\R \wh\varphi_4(X), 
$$
i.e., $e_1=2$, $e_2=2$, $e_3=1$, $e_4=0$;
$I_0=\{4\}$, $I_1=\{3,4\}$. By Theorem \ref{constacyclic cor2},
Eq.\eqref{generate C} and \eqref{check C},
$$\begin{array}{c}
g_0(X)=\varphi_1(X)\varphi_2(X)\varphi_3(X),~~ 
g_1(X)=\varphi_1(X)\varphi_2(X);\\[3pt]
h_0(X)=\varphi_3(X)\varphi_4(X),~~ h_1(X)=\varphi_4(X).
\end{array}$$
The $q$-invariant subset of $\ol g_1(X)=\ol\varphi_1(X)\ol\varphi_2(X)$
is $Q_1\cup Q_2=\{1,3,5\}$, which are consecutive in $1+2\Z_{12}$.
By Theorem \ref{BCH bound}, the minimal Hamming weight ${\rm w_H}(C)\ge 4$.
In fact, ${\rm w_H}(C)=4$, because:
$$
g_1(X)=(X^2+3X+9)(X+3)=2+18X+6X^2+X^3,
$$
hence, in $C\cap\R \pi=\R \pi g_1(X)$, we have a code word
$$5\cdot(2,18,6,1,0,0)=(10,15,5,5,0,0),$$
whose Hamming weight is equal to $4$.
\end{example}

\section{Constacyclic codes over finite PIRs}

We abbreviate ``principal ideal ring'' by ``PIR''.
In this section we investigate the question:  
when are the constacyclic codes over a finite PIR principal?

\begin{lemma}\label{R=R_1+...}
Let $R=R_1\oplus\cdots\oplus R_m$ be a finite ring, where
$R_1,\cdots,R_m$ are finite local rings. Let 
$\lambda=\lambda_1+\cdots+\lambda_m\in R^\times$,
where $\lambda_i\in R_i^\times$ for $i=1,\cdots,m$.
Let $n>1$ be an integer. Then
$$
R[X]\big/\langle X^n-\lambda\rangle
=R_1[X]\big/\langle X^n-\lambda_1\rangle
\oplus\cdots\oplus
R_m[X]\big/\langle X^n-\lambda_m\rangle.
$$
In particular, $R[X]\big/\langle X^n-\lambda\rangle$ is principal
if and only if every $R_i[X]\big/\langle X^n-\lambda_i\rangle$ 
for $1\le i\le m$ is principal.
\end{lemma}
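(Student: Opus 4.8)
The plan is to exploit the ring direct-sum decomposition $R=R_1\oplus\cdots\oplus R_m$ at the level of polynomial rings. First I would introduce the block idempotents: let $\varepsilon_i\in R$ be the identity element of the summand $R_i$, so that $\varepsilon_1,\dots,\varepsilon_m$ are pairwise orthogonal idempotents with $\varepsilon_1+\cdots+\varepsilon_m=1_R$ and $R_i=R\varepsilon_i$. Viewing each $\varepsilon_i$ as a constant polynomial, the same relations hold in $R[X]$, so $R[X]=R[X]\varepsilon_1\oplus\cdots\oplus R[X]\varepsilon_m$ as rings, and $R[X]\varepsilon_i$ is exactly the ring of polynomials with coefficients in $R_i$; that is, $R[X]\cong R_1[X]\oplus\cdots\oplus R_m[X]$, where the variable of $R_i[X]$ is the image $X\varepsilon_i$ of $X$ under the projection onto the $i$-th block.

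Next I would track the generator $X^n-\lambda$ through this identification. Since $\varepsilon_i\lambda=\lambda_i$ by orthogonality, multiplying by $\varepsilon_i$ sends $X^n-\lambda$ to $X^n\varepsilon_i-\lambda_i$, which is precisely the element ``$X^n-\lambda_i$'' inside $R_i[X]$. Because an ideal of a finite direct sum of rings is the direct sum of its images in the summands (the ring-theoretic content of Remark \ref{direct sum ideal}, iterated over the $m$ blocks), and because $\varepsilon_i\langle X^n-\lambda\rangle_{R[X]}=\langle\varepsilon_i(X^n-\lambda)\rangle_{R_i[X]}=\langle X^n-\lambda_i\rangle_{R_i[X]}$, we obtain $\langle X^n-\lambda\rangle=\bigoplus_{i=1}^m\langle X^n-\lambda_i\rangle_{R_i[X]}$. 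Passing to the quotient, which commutes with finite direct sums, then yields the stated identity $R[X]/\langle X^n-\lambda\rangle=\bigoplus_{i=1}^m R_i[X]/\langle X^n-\lambda_i\rangle$.

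For the final ``in particular'' I would invoke the general fact that in a finite direct sum of rings $S=S_1\oplus\cdots\oplus S_m$ every ideal $C$ has the form $C=C_1\oplus\cdots\oplus C_m$ with each $C_i$ an ideal of $S_i$, and the ideal generated by an element $(c_1,\dots,c_m)$ is $\langle c_1\rangle\oplus\cdots\oplus\langle c_m\rangle$; hence $C$ is principal if and only if each $C_i$ is principal, so $S$ is a PIR if and only if every $S_i$ is. Applying this with $S=R[X]/\langle X^n-\lambda\rangle$ and the summands $R_i[X]/\langle X^n-\lambda_i\rangle$ gives the conclusion. There is no serious obstacle here: this lemma is essentially a Chinese-Remainder bookkeeping exercise. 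The only point I would take care with is keeping straight that the variable of $R_i[X]$ is the image of $X$ under reduction, and verifying that the ideal genuinely splits as a direct sum rather than merely containing the sum of its pieces — which is exactly what the idempotent computation $\varepsilon_i\langle X^n-\lambda\rangle=\langle\varepsilon_i(X^n-\lambda)\rangle$ secures.
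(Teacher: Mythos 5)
Your proof is correct and follows essentially the same route as the paper: both decompose $R[X]$ via the block idempotents $\varepsilon_i$ (the paper writes $\iota_i$), observe $\varepsilon_i\lambda=\lambda_i$ so that $\varepsilon_i(X^n-\lambda)$ becomes $X^n-\lambda_i$ in $R_i[X]$, and pass to quotients. You spell out the ideal-splitting step and the ``in particular'' clause a bit more explicitly than the paper does, but the underlying argument is identical.
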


\begin{proof}
Let $1=1_R=\iota_1+\cdots+\iota_m$ 
with $\iota_i\in R_i$ for $i=1,\cdots,m$;
i.e., $\iota_i$ is the identity of the local ring $R_i$.
For any $a\in R$, $a=\iota_1a+\cdots+\iota_m a$ and $\iota_ia\in R_i$.
Then, for any $a(X)=\sum_{j}a_jX^j\in R[X]$,
$$\textstyle
 a(X)=\iota_1 a(X)+\cdots+\iota_m a(X), \quad 
  \iota_i a(X)=\sum_{j}\iota_ia_jX^j\in R_i[X].
$$
Since $\iota_i R=R_i$, 
$$
 R[X]=\iota_1 R[X]\oplus\cdots\oplus \iota_m R[X]
 =R_1[X]\oplus\cdots\oplus R_m[X].
$$
And, $\iota_i\lambda=\lambda_i$ for $i=1,\cdots,m$, so
$$
X^n\!-\!\lambda =\iota_1(X^n\!-\!\lambda)+\cdots+\iota_m(X^n\!-\!\lambda)
=(\iota_1 X^n\!-\!\lambda_1)+\cdots+(\iota_m X^n\!-\!\lambda_m).
$$
Thus,
$$
R[X]\big/\langle X^n\!-\!\lambda\rangle
=R_1[X]\big/\langle \iota_1 X^n\!-\!\lambda_1\rangle
\oplus\cdots\oplus
R_m[X]\big/\langle \iota_m X^n\!-\!\lambda_m\rangle.
$$
Note that $\iota_i$ is just the identity $1_{R_i}$ of $R_i$.
As a conventional expression, 
$R_i[X]\big/\langle \iota_i X^n-\lambda_i\rangle=
R_i[X]\big/\langle  X^n-\lambda_i\rangle$.
\end{proof}

\begin{theorem}
Let $R$ be a finite PIR and $\lambda\in R^\times$. 
If $\gcd(n,{\rm char} R)=1$,
then the quotient ring $R[X]\big/\langle X^n-\lambda\rangle$ 
is a finite PIR.
\end{theorem}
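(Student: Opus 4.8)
The plan is to reduce the statement to the chain ring case already settled in Theorem~\ref{constacyclic}. First I would invoke the structure theory of finite principal ideal rings recalled in Eq.~\eqref{being PIR}: write $R=R_1\oplus\cdots\oplus R_m$ with every $R_i$ a finite chain ring, and correspondingly decompose $\lambda=\lambda_1+\cdots+\lambda_m$ with $\lambda_i\in R_i^\times$. Lemma~\ref{R=R_1+...} then gives the ring isomorphism
$$
R[X]\big/\langle X^n-\lambda\rangle
\;\cong\;
\bigoplus_{i=1}^{m} R_i[X]\big/\langle X^n-\lambda_i\rangle ,
$$
so it suffices to show each summand $R_i[X]/\langle X^n-\lambda_i\rangle$ is principal and then observe that a finite direct sum of principal ideal rings is again a principal ideal ring.

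The coprimality hypothesis transfers to each summand: since ${\rm char}\,R={\rm lcm}\big({\rm char}\,R_1,\cdots,{\rm char}\,R_m\big)$ (noted just after Eq.~\eqref{+local}), each ${\rm char}\,R_i$ divides ${\rm char}\,R$, hence $\gcd(n,{\rm char}\,R_i)=1$ for every $i$. Writing ${\rm char}\,R_i=p_i^{s_i}$ by Lemma~\ref{facts on local}(1), this says $\gcd(n,p_i)=1$, which is exactly the running hypothesis of Section~\ref{cyclic codes}. Therefore Theorem~\ref{constacyclic}, applied to the finite chain ring $R_i$, the unit $\lambda_i$, and the length $n$, tells us that $R_i[X]/\langle X^n-\lambda_i\rangle$ is a finite principal ideal ring. (One should also remark that $n>1$ is harmless and the case $n=1$ is trivial, or simply keep $n>1$ in force.)

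Finally I would assemble the summands. If $C$ is any ideal of $R[X]/\langle X^n-\lambda\rangle$, then under the decomposition above $C=C_1\oplus\cdots\oplus C_m$ with each $C_i$ an ideal of $R_i[X]/\langle X^n-\lambda_i\rangle$ (this is the standard fact recorded in Remark~\ref{direct sum ideal}); by the previous paragraph each $C_i=\big(R_i[X]/\langle X^n-\lambda_i\rangle\big)\,c_i$ is principal, and then $c_1+\cdots+c_m$ generates $C$, exactly as in the computation following Eq.~\eqref{being PIR}. Hence $R[X]/\langle X^n-\lambda\rangle$ is a finite principal ideal ring.

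There is no real obstacle here: the whole content is packaged in Theorem~\ref{constacyclic} and Lemma~\ref{R=R_1+...}, and the only point that needs a line of justification is the propagation of the condition $\gcd(n,{\rm char}\,R)=1$ to the local (chain) components, which is immediate from the least-common-multiple formula for the characteristic of a direct sum.
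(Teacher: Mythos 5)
Your proof is correct and follows essentially the same route as the paper: decompose $R$ into finite chain rings via Eq.~\eqref{being PIR}, pass the hypothesis $\gcd(n,{\rm char}\,R)=1$ to each component through the lcm formula, apply Theorem~\ref{constacyclic} to each summand, and reassemble via Lemma~\ref{R=R_1+...}. The only difference is that you spell out the final assembly of principal generators, which Lemma~\ref{R=R_1+...} already packages.
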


\begin{proof}
By Eq.\eqref{being PIR}, $R=R_1\oplus\cdots\oplus R_m$, 
where $R_i$, $i=1,\cdots,m$, are finite chain rings. 
By Eq.\eqref{+local}, 
${\rm char} R\!=\!{\rm lcm}\big({\rm char} R_1,\cdots,{\rm char} R_m\big)$.
Since $\gcd(n,{\rm char} R)\!=\!1$, we get  
$\gcd(n,{\rm char} R_i)=1$ for $i=1,\cdots,m$.
By Lemma~\ref{R=R_1+...} and Theorem~\ref{constacyclic}, 
we obtain the desired result.
\end{proof}

However, the condition ``$\gcd(n,{\rm char} R)=1$'' is not necessary. 
The following result is known, e.g., see \cite{FZ}.
For convenience, we provide a shorter proof.

\begin{lemma}\label{over field} 
Let $F$ be a finite field, $\lambda\!\in\! F^\times$.
Then, for any $n>1$, 
$F[\!X\!]\big/\langle X^n\!-\!\lambda\rangle$ is a finite PIR.
\end{lemma}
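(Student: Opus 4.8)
The plan is to reduce $F[X]/\langle X^n-\lambda\rangle$ to a direct sum of finite chain rings, so that Eq.~\eqref{being PIR} applies directly. Write $n=n'p^a$ where $p=\mathrm{char}\,F$ and $\gcd(n',p)=1$; since $\lambda\in F^\times$ and $F$ is finite, $\lambda$ has a $p^a$-th root $\mu\in F^\times$ (the Frobenius $x\mapsto x^{p^a}$ is an automorphism of $F$), so $X^n-\lambda=X^{n'p^a}-\mu^{p^a}=(X^{n'}-\mu)^{p^a}$ in $F[X]$. Now factor $X^{n'}-\mu=\psi_1(X)\cdots\psi_m(X)$ into pairwise coprime monic irreducibles in $F[X]$; because $\gcd(n',p)=1$ these $\psi_i$ are distinct, hence the $\psi_i^{p^a}$ are still pairwise coprime, and $X^n-\lambda=\psi_1(X)^{p^a}\cdots\psi_m(X)^{p^a}$.

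Next I would apply the Chinese Remainder Theorem exactly as in the proof of Theorem~\ref{constacyclic}: the pairwise coprimality of the $\psi_i^{p^a}$ gives a ring isomorphism
\begin{equation*}
F[X]\big/\langle X^n-\lambda\rangle \;\cong\; \bigoplus_{i=1}^{m} F[X]\big/\langle\psi_i(X)^{p^a}\rangle.
\end{equation*}
By Lemma~\ref{R=R_1+...} (or directly by Remark~\ref{direct sum ideal}), it suffices to show each summand $F[X]/\langle\psi_i(X)^{p^a}\rangle$ is a PIR. For a fixed $i$, set $K=F[X]/\langle\psi_i(X)\rangle$, a finite field, and let $\bar\psi$ denote the image of $\psi_i$. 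Then $F[X]/\langle\psi_i(X)^{p^a}\rangle$ is a finite local ring whose maximal ideal is generated by $\psi_i(X)$ and whose residue field is $K$: indeed $\psi_i(X)$ is nilpotent of index $p^a$ in this quotient, and the quotient by $\langle\psi_i(X)\rangle$ is $K$. By Lemma~\ref{being chain} this makes $F[X]/\langle\psi_i(X)^{p^a}\rangle$ a finite chain ring, in particular a PIR.

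Assembling: the direct sum of chain rings is a PIR by the converse direction of Eq.~\eqref{being PIR} (as spelled out in the text following it), so $F[X]/\langle X^n-\lambda\rangle$ is a finite PIR. The only genuinely substantive point — and the one I would be careful to state cleanly — is the extraction of the $p^a$-th root $\mu$ of $\lambda$, which collapses the $p$-part of the length and is what forces $X^n-\lambda$ to be a perfect $p^a$-th power; everything after that is the same CRT-plus-Lemma~\ref{being chain} argument already used for Theorem~\ref{constacyclic}, now without the coprimality hypothesis on $n$. I expect no real obstacle beyond making sure the coprimality of the factors $\psi_i^{p^a}$ is justified from the separability of $X^{n'}-\mu$.
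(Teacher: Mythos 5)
Your proposal is correct and is essentially the paper's own proof: both extract a $p^a$-th root of $\lambda$ via the Frobenius automorphism to write $X^n-\lambda=(X^{n'}-\mu)^{p^a}$, decompose by CRT into summands $F[X]/\langle\psi_i(X)^{p^a}\rangle$, and invoke Lemma~\ref{being chain} to see each summand is a finite chain ring, hence the whole ring is a PIR.
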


\begin{proof} 
Let $|F|=p^r$. 
Let $n=n'p^e$ with $p\nmid n'$ 
and $e\ge 1$ (otherwise the lemma holds by Theorem \ref{constacyclic}). 
The map $a\mapsto a^{p^e}$ for $a\in F$ is an automorphism of $F$.
There is a $\lambda'\in F$ such that $\lambda=\lambda'^{p^e}$.
In $F[X]$, 
$
X^{n'}-\lambda'=\varphi_1(X)\cdots\varphi_m(X)
$
with $\varphi_1(X),\cdots,\varphi_m(X)$ being monic irreducible 
and coprime to each other. Then
$$
X^{n'p^e}-\lambda=X^{n'p^e}-\lambda'^{p^e}=
(X^{n'}-\lambda')^{p^e}
=\varphi_1(X)^{p^e}\cdots\varphi_m(X)^{p^e}
$$
with $\varphi_1(X)^{p^e}, \cdots,\varphi_m(X)^{p^e}$ coprime each other. 
Thus %By Chinese Remainder Theorem,
$$
F[X]\big/\big\langle X^n-1\big\rangle \cong 
F[X]\big/\big\langle \varphi_1(X)^{p^e} \big\rangle 
  \oplus \cdots \oplus 
    F[X]\big/\big\langle \varphi_m(X)^{p^e} \big\rangle.
$$
For $\R_i:=F[X]\big/\big\langle \varphi_i(X)^{p^e} \big\rangle$, 
$1\le i\le m$, $\varphi_i(X)$ is a nilpotent element and
$\R_i/\R_i\varphi_i(X)\cong F[X]/\langle\varphi_i(X)\rangle$
is a finite field. By Lemma \ref{being chain},
$\R _i$ is a finite chain ring.
\end{proof}

By Lemma \ref{R=R_1+...}, Lemma \ref{over field} and Theorem \ref{constacyclic}, 
we obtain a sufficient condition for the constacyclic codes being principal.

\begin{corollary}\label{a sufficiency}
Let $R=R_1\oplus\cdots\oplus R_m$ be a finite PIR, where
$R_i$ for $i=1,\cdots,m$, are finite chain rings 
with radical $J(R_i)$ of nilpotency index $\ell_i$. 
Let $\lambda\in R^\times$.
If $\min\{\ell_i,\gcd(n,{\rm char} R_i)\}=1$ for $i=1,\cdots,m$, 
then any constacyclic code over~$R$ of length~$n$ is principal.
\end{corollary}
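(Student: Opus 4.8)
The plan is to reduce the statement to the two cases already handled in the excerpt, using the direct-sum decomposition of the ambient ring. First I would invoke Lemma~\ref{R=R_1+...}: since $R=R_1\oplus\cdots\oplus R_m$ and $\lambda=\lambda_1+\cdots+\lambda_m$ with $\lambda_i\in R_i^\times$, we have
$$
R[X]\big/\langle X^n-\lambda\rangle
=R_1[X]\big/\langle X^n-\lambda_1\rangle\oplus\cdots\oplus R_m[X]\big/\langle X^n-\lambda_m\rangle,
$$
and this quotient is principal if and only if each summand $R_i[X]\big/\langle X^n-\lambda_i\rangle$ is principal. So it suffices to show that for each $i$, under the hypothesis $\min\{\ell_i,\gcd(n,{\rm char}\,R_i)\}=1$, the ring $R_i[X]\big/\langle X^n-\lambda_i\rangle$ is a finite PIR.

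Next I would split on which term of the minimum equals~$1$. If $\gcd(n,{\rm char}\,R_i)=1$, then Theorem~\ref{constacyclic} applies directly to the finite chain ring $R_i$ (with $\lambda_i\in R_i^\times$), giving that $R_i[X]\big/\langle X^n-\lambda_i\rangle$ is a finite principal ideal ring. If instead $\ell_i=1$, then the radical $J(R_i)$ has nilpotency index~$1$, i.e.\ $J(R_i)=0$, so $R_i$ is a finite field; then Lemma~\ref{over field} (applied with $F=R_i$ and $\lambda=\lambda_i$) shows that $R_i[X]\big/\langle X^n-\lambda_i\rangle$ is a finite PIR, with no coprimality assumption on~$n$ needed. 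In either case the $i$-th summand is principal.

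Finally, combining the two cases over all $i=1,\dots,m$ and feeding the conclusion back through the ``if and only if'' part of Lemma~\ref{R=R_1+...}, we get that $R[X]\big/\langle X^n-\lambda\rangle$ is principal; equivalently, every ideal of this ring, i.e.\ every constacyclic code over $R$ of length $n$, is generated by a single element, hence principal. I do not anticipate a genuine obstacle here: the work is entirely in the cited Theorem~\ref{constacyclic} and Lemma~\ref{over field}, and the only thing to be careful about is the bookkeeping that $\ell_i=1$ really does force $R_i$ to be a field (so that Lemma~\ref{over field} is legitimately applicable) and that $\lambda_i$ lands in $R_i^\times$, which is immediate from $\lambda\in R^\times$ and the direct-sum structure. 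The mild subtlety — really just a point to state cleanly rather than a difficulty — is that the two cases are not mutually exclusive, so one should phrase the argument as ``at least one of the two holds,'' and in the overlap ($\ell_i=1$ and $\gcd(n,{\rm char}\,R_i)=1$) either route works.
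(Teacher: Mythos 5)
Your proposal is correct and matches the paper's approach: the paper's proof is just a one-line citation of Lemma~\ref{R=R_1+...}, Lemma~\ref{over field}, and Theorem~\ref{constacyclic}, and your write-up simply spells out how those three ingredients combine (decompose via Lemma~\ref{R=R_1+...}, then case-split on which term of the minimum is $1$, using Lemma~\ref{over field} when $\ell_i=1$ forces $R_i$ to be a field and Theorem~\ref{constacyclic} when $\gcd(n,{\rm char}\,R_i)=1$).
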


The condition is in fact also necessary for cyclic codes.
We'll prove a little more general version. 

\begin{definition}\label{d iso cyclic}\rm
Let $R$ be a finite ring and $\lambda\in R^\times$.
If there is a ring isomorphism 
$\rho: R[X]\big/\langle X^n-\lambda\rangle
 \to R[X]\big/\langle X^n-1\rangle$, which preserves Hamming weights,
then $\rho$ is called an {\em isometry}, and
the $\lambda$-constacyclic codes over $R$ of length $n$
are called {\em isometrically cyclic} codes over $R$.
\end{definition}

\begin{lemma}\label{l iso cyclic}
Let $R$ be a finite ring, $\lambda\in R^\times$.
If $\gcd\big(n,{\rm ord}_{R^\times}(\lambda)\big)=1$
(where ${\rm ord}_{R^\times}(\lambda)$ denotes the order 
of $\lambda$ in the multiplicative unit group $R^\times$), 
then $\lambda$-constacyclic codes are isometrically cyclic.
\end{lemma}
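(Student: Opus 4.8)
The plan is to exhibit an explicit isometry. Write $t={\rm ord}_{R^\times}(\lambda)$; the hypothesis $\gcd(n,t)=1$ lets me pick integers $u,v$ with $un\equiv 1\pmod t$, say $un+vt=1$. Set $\mu:=\lambda^{u}\in R^\times$. Then $\mu^{n}=\lambda^{un}=\lambda^{1-vt}=\lambda\cdot(\lambda^{t})^{-v}=\lambda$, since $\lambda^{t}=1$. So $\mu$ is a unit of $R$ whose $n$-th power is $\lambda$. The candidate map is
\begin{equation*}
\rho:\ R[X]\big/\langle X^{n}-1\rangle \longrightarrow R[X]\big/\langle X^{n}-\lambda\rangle,\qquad
a(X)\longmapsto a(\mu^{-1}X),
\end{equation*}
that is, $\sum_{i}a_iX^i\mapsto\sum_i a_i\mu^{-i}X^i$ (and I will check the direction of the lemma by composing with the inverse, since the statement asks for an isomorphism $R[X]/\langle X^n-\lambda\rangle\to R[X]/\langle X^n-1\rangle$, which is just $\rho^{-1}$, $b(X)\mapsto b(\mu X)$).

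The key steps, in order. \textbf{Step 1: $\rho$ is well defined.} The substitution $X\mapsto\mu^{-1}X$ is a ring endomorphism of $R[X]$; it sends $X^{n}-1$ to $\mu^{-n}X^{n}-1=\mu^{-n}(X^{n}-\mu^{n})=\mu^{-n}(X^{n}-\lambda)$, which generates the same ideal $\langle X^{n}-\lambda\rangle$ because $\mu^{-n}$ is a unit. Hence $\rho$ descends to the quotient rings. \textbf{Step 2: $\rho$ is a ring isomorphism.} It is a ring homomorphism by construction, and the map $b(X)\mapsto b(\mu X)$ is a two-sided inverse at the polynomial level (again a well-defined substitution, sending $X^n-\lambda$ to $\mu^nX^n-\lambda=\lambda(X^n-1)$), so the induced quotient maps are mutually inverse. \textbf{Step 3: $\rho$ preserves Hamming weight.} On representatives of degree $<n$, $\rho$ acts coordinatewise as $a_i\mapsto a_i\mu^{-i}$; since each $\mu^{-i}$ is a unit, $a_i\mu^{-i}=0\iff a_i=0$, so the support — and hence the Hamming weight — is unchanged. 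Therefore $\rho^{-1}$ is the required isometry, and every $\lambda$-constacyclic code of length $n$ over $R$ is isometrically cyclic in the sense of Definition \ref{d iso cyclic}.

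I do not expect a genuine obstacle here; the only point requiring a little care is producing the $n$-th root $\mu$ of $\lambda$ inside $R^{\times}$ itself — this is exactly where the coprimality $\gcd(n,t)=1$ enters, via Bézout in $\Z$ applied to the exponents modulo $t$ — and then verifying that the two substitution maps are genuinely inverse after passing to the quotients (one must note that a substitution sends the defining ideal into the defining ideal, not merely that it does so up to a unit, which is fine since the ideals are principal generated by associates). Everything else is a routine check that a monomial substitution by a unit is a weight-preserving automorphism of $R[X]$ compatible with the relevant quotients.
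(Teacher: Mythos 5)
Your proposal is correct and follows essentially the same route as the paper: both use Bézout on $\gcd(n,t)=1$ to produce an $n$-th root $\mu=\lambda^{u}$ of $\lambda$ inside $\langle\lambda\rangle\subseteq R^\times$, then carry out the monomial substitution $X\mapsto\mu X$ (your $\rho^{-1}$ is exactly the paper's $\tilde\rho$, since $\mu=\lambda^{\alpha}$ with $\alpha=u$), and finally observe that the coordinatewise scaling by units preserves the Hamming support.
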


\begin{proof}
Let $t={\rm ord}_{R^\times}(\lambda)$. So $\lambda^t=1$.
Since 
$\gcd(n,t)=1$, there are integers $\alpha,\beta$ such that $n\alpha+t\beta=1$.
Then $\lambda=\lambda^{n\alpha+t\beta}
=\lambda^{n\alpha}\lambda^{t\beta}=\lambda^{n\alpha}$.
The following is obviously a ring isomorphism.
$$ \rho_{}:~ R[X]\longrightarrow R[X],~~
  f(X)\longmapsto f(\lambda^{\alpha}X). 
$$
And 
$$\rho(X^n-\lambda)=(\lambda^{\alpha} X)^n-\lambda
=\lambda^{\alpha n} X^n-\lambda^{\alpha n}=\lambda^{\alpha n}(X^n-1).
$$
Since $\lambda^{\alpha n}$ is invertible in $R[X]$, 
$$
\rho\big(R[X](X^n-\lambda)\big)=R[X]\big(\lambda^{\alpha n}(X^n-1)\big)
=R[X](X^n-1).
$$
Thus the ring isomorphism $\rho$ induces a ring isomorphism
\begin{equation}\label{isometry}
\begin{array}{cccc}
\tilde\rho: & R[X]\big/\langle X^n-\lambda\rangle&\longrightarrow& 
  R[X]\big/\langle X^n-1\rangle,\\[5pt]
 & a(X)=\sum_{i=0}^{n-1}a_iX^i&\longmapsto& 
 a(\lambda^{\alpha}X)=\sum_{i=0}^{n-1}a_i\lambda^{\alpha i}X^i.
\end{array}
\end{equation}
Because $\lambda$ is invertible, $a_i\lambda^{\alpha i}=0$ if and only if
$a_i=0$. That is, 
$$
 {\rm w_H}(\tilde\rho(a(X)))={\rm w_H}(a(X)), \quad
   \forall~a(X)\in R[X]\big/\langle X^n-\lambda\rangle.
$$
In conclusion, Eq.\eqref{isometry} is an isometry.
\end{proof}

\begin{remark}\rm
It's a typical example that, as ${\rm ord}_{R^\times}\!(\!-\!1)\!=\!2$ 
(or $1$ once ${\rm char} R\!=\!2$), 
the negacyclic codes are isometrically cyclic provided the code length $n$ is odd.
\end{remark}

\begin{theorem}\label{cyclic is p.i.r.}
Let $R$ be a finite chain ring with radical
$J(R)=R\pi$ of nilpotency index $\ell$, let $\lambda\in R^\times$ such that
$\gcd\big(n,{\rm ord}_{R^\times}(\lambda)\big)=1$. 
Then $R[X]/\langle X^n-\lambda\rangle$ is a PIR
if and only if $\min\{\ell,\gcd(n,{\rm char} R)\}=1$. 
\end{theorem}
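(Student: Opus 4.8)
The plan is to treat the two directions separately, using the isometry from Lemma~\ref{l iso cyclic} to move the whole problem to the cyclic case $\lambda=1$, since $\gcd(n,\mathrm{ord}_{R^\times}(\lambda))=1$ makes $R[X]/\langle X^n-\lambda\rangle$ isometrically (in particular ring-) isomorphic to $R[X]/\langle X^n-1\rangle$; a ring is a PIR iff any ring isomorphic to it is, so it suffices to decide when $\R_1:=R[X]/\langle X^n-1\rangle$ is a PIR. For the ``if'' direction, if $\min\{\ell,\gcd(n,\mathrm{char}\,R)\}=1$ then either $\ell=1$, i.e.\ $R=F$ is a field, and Lemma~\ref{over field} gives that $\R_1$ is a PIR; or $\gcd(n,\mathrm{char}\,R)=1$, and Theorem~\ref{constacyclic} (with $\lambda=1$) gives that $\R_1$ is a PIR. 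So this direction is immediate from results already proved.

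The substance is the ``only if'' direction, proved by contraposition: assume $\ell\ge 2$ and $p\mid n$ (writing $\mathrm{char}\,R=p^s$, so $p\mid\gcd(n,\mathrm{char}\,R)$), and exhibit a non-principal ideal of $\R_1$. Write $n=n'p^e$ with $p\nmid n'$ and $e\ge 1$. The idea is that the factor $(X-1)^{p^e}$ (or more precisely a basic-irreducible-power factor of $X^n-1$ over $R$ coming from the repeated roots) forces a local direct summand of $\R_1$ that is \emph{not} a chain ring, hence not a PIR, hence $\R_1$ is not a PIR. Concretely: first decompose $\R_1$ using the idempotents coming from the coprime factorization over $R$ of $X^{n'}-1$ lifted via Hensel, raised to the $p^e$-th power; one summand is $S:=R[X]/\langle \varphi(X)^{p^e}\rangle$ where $\varphi(X)$ is basic irreducible (take $\varphi(X)=X-1$ for concreteness, since $1$ is always a root). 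Then I would show $S$ is a finite \emph{local} ring whose maximal ideal is $\langle \pi,\varphi(X)\rangle$, and that this maximal ideal is not principal: the two elements $\pi$ and $\varphi(X)=X-1$ are both in $J(S)$ but neither divides the other, because $\pi$ is nilpotent of index $\ell\ge 2$ coming from the coefficient ring while $X-1$ is nilpotent of index $p^e$ coming from the polynomial variable, and these nilpotencies are ``independent'' — one cannot write $\pi = r(X)(X-1)$ modulo $\varphi(X)^{p^e}$ with $r(X)\in S$ since reducing mod $\pi$ would give $0 = \bar r(X)(X-1)$ in $F[X]/\langle(X-1)^{p^e}\rangle$ forcing $\pi$ to already lie in $\pi S$ in a way that contradicts $\ell\ge2$; symmetrically $X-1$ does not divide $\pi$. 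By Lemma~\ref{being chain} a finite chain ring has principal maximal ideal, so $S$ is not a chain ring; since $S$ is local, it is not a PIR (a finite PIR is a direct sum of chain rings, so a \emph{local} PIR is a chain ring). Therefore $\R_1\cong S\oplus(\text{rest})$ is not a PIR, and via the isometry neither is $R[X]/\langle X^n-\lambda\rangle$.

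The main obstacle I expect is making rigorous the claim that $\pi$ and $X-1$ are genuinely incomparable in the divisibility order of $S$ — i.e.\ that $J(S)$ is really not principal. The clean way is to count: compute $|J(S)/J(S)^2|$ as an $F$-vector space. If $J(S)$ were principal, $\dim_F J(S)/J(S)^2 \le 1$; but $J(S)=(\pi,X-1)$ should give $\dim_F J(S)/J(S)^2 = 2$ (the images of $\pi$ and of $X-1$ are $F$-linearly independent there, since $\ell\ge2$ guarantees $\pi\notin\langle\pi^2,\pi(X-1),(X-1)^2\rangle$ and $e\ge1$ guarantees $(X-1)\notin\langle\pi^2,\pi(X-1),(X-1)^2,\pi\rangle$-type relations fail). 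This dimension computation is the one place where a careful but routine argument with the $R$-module structure of $S$ is needed; everything else is assembly of Lemmas~\ref{being chain}, \ref{over field}, \ref{l iso cyclic} and Theorem~\ref{constacyclic}.
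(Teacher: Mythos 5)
Your overall architecture matches the paper's: reduce to $\lambda=1$ via the isometry of Lemma~\ref{l iso cyclic}, split $R[X]/\langle X^n-1\rangle$ into local summands by CRT, and show one summand is a local ring whose maximal ideal is not principal (hence not a chain ring, hence not a PIR). The ``if'' direction is also exactly the paper's (Lemma~\ref{over field} when $\ell=1$, Theorem~\ref{constacyclic} when $\gcd(n,\mathrm{char}\,R)=1$). But there is a genuine error in the factorization you use to produce the summand $S$. You claim $X^n-1$ factors over $R$ as $\prod_i\varphi_i(X)^{p^e}$, giving $S=R[X]/\langle(X-1)^{p^e}\rangle$. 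That identity holds only modulo $\pi$ (where Frobenius applies); over $R$ itself one has $\prod_i\varphi_i(X)^{p^e}=(X^{n'}-1)^{p^e}$, which is \emph{not} $X^n-1$ unless $p=0$ in $R$. Concretely for $R=\Z_4$, $p=2$, $n=2$: $(X-1)^2=X^2-2X+1\ne X^2-1$, and $(X-1)^2\nmid X^2-1$ in $\Z_4[X]$. Consequently your $S$ is not a quotient of $\R_1$, let alone a direct summand, and the rest of the argument has nothing to stand on. The correct coprime factorization is $X^n-1=\prod_i\varphi_i(X^{p^e})$ (substitute $Y=X^{p^e}$ into $Y^{n'}-1=\prod_i\varphi_i(Y)$), so the relevant summand is $S=R[X]/\langle\varphi_1(X^{p^e})\rangle=R[X]/\langle X^{p^e}-1\rangle$. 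This is what the paper uses.

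Once $S$ is corrected, your plan for showing $J(S)=\langle\pi,X-1\rangle$ is not principal (via $\dim_F J(S)/J(S)^2=2$) is sound in spirit, but you leave exactly that computation as ``routine.'' The paper's version is more direct and easier to make airtight: it shows $\pi\notin S(X-1)$ (because $S/S(X-1)\cong R$ and $\pi\ne 0$ in $R$, using $\ell\ge 2$) and $X-1\notin S\pi$ (because $S/S\pi\cong F[X]/\langle(X-1)^{p^e}\rangle$ and $X-1\ne 0$ there, using $e\ge 1$), so $S\pi$ and $S(X-1)$ are two nilpotent ideals neither of which contains the other, which is impossible in a chain ring. I'd recommend adopting this incomparability argument rather than the $J/J^2$-dimension one, as it avoids any delicate linear-independence check over the residue field.
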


\begin{proof} The sufficiency is a special case of Corollary \ref{a sufficiency}. 

Let ${\rm char} R=p^s$. In the following we assume that $\ell\ge 2$,
$n=n'p^e$ with $e\ge 1$ and $p\nmid n'$, and prove that 
$R[X]/\langle X^n-\lambda\rangle$ is not a PIR. 
By Lemma \ref{l iso cyclic}, we can further assume that $\lambda=1$.
Let $F:=\ol R$ be the residue field.
In $F[X]$
we have an irreducible factorization: 
$$
X^{n'}-1=\ol\varphi_1(X)\ol\varphi_2(X)\cdots\ol\varphi_m(X),
$$
where $\ol\varphi_1(X)=X-1$. Then in $R[X]$ we have
$$
X^{n'}-1=\varphi_1(X)\varphi_2(X)\cdots\varphi_m(X),
$$
where $\varphi_i(X)$ is the Hensel's lifting of $\ol\varphi_i(X)$,
and $\varphi_1(X), \varphi_2(X), \cdots,\varphi_m(X)$ 
are basic irreducible and coprime to each other.
And, in the present case we have that $\varphi_1(X)=X-1$.
Thus,
$$X^{n}-1=(X^{p^e})^{n'}-1=\varphi_1(X^{p^e})
\varphi_2(X^{p^e})\cdots\varphi_m(X^{p^e}),
$$
and $\varphi_1(X^{p^e}), \varphi_2(X^{p^e}), \cdots,\varphi_m(X^{p^e})$ 
are still coprime to each other. By Chinese Remainder Theorem,
$$
R[X]\big/\big\langle X^n-1\big\rangle \cong 
R[X]\big/\big\langle \varphi_1(X^{p^e}) \big\rangle \oplus 
   R[X]\big/\big\langle \varphi_2(X^{p^e}) \big\rangle\oplus\cdots\oplus 
    R[X]\big/\big\langle \varphi_m(X^{p^e}) \big\rangle.
$$
It is enough to show that 
$$
\R :=R[X]\big/\big\langle \varphi_1(X^{p^e}) \big\rangle =
R[X]\big/\big\langle X^{p^e}-1 \big\rangle
$$ 
is a local ring but not a chain ring.

Note that $\R \pi$ is a nilpotent ideal of $\R $. And, in the quotient
$$\ol\R :=
 \R /\R \pi \cong F[X]\big/ F[X](X^{p^e}-1),
$$
the element $X-1\ne 0$ but $(X-1)^{p^e}=X^{p^e}-1=0$. 
Turn to $\R $, $(X-1)^{p^e}\in\R \pi$. Thus, $X-1$ 
is a nilpotent element of $\R $, which is not contained in $\R \pi$
(because $e\ge 1$).

Further, the ideal $\langle\pi,X-1 \rangle$ of $\R $ generated by $\pi$ and $X-1$
is a nilpotent ideal and 
(by First Isomorphism Theorem)
$$
\R \big/\langle\pi,X-1 \rangle=\ol\R \big/\langle X-1\rangle\cong 
F[X]\big/\langle X-1\rangle=F,
$$
which implies that $\R $ is a local ring
with $J(\R )=\langle\pi,X-1 \rangle$.

On the other hand, in $R[X]$, $X^{p^e}-1=(X-1)(X^{p^e-1}+\cdots+X+1)$,
so the ideal $R[X]\cdot(X^{p^e}-1)\subseteq R[X]\cdot(X-1)$, 
and we have an isomorphism
(by First Isomorphism Theorem again)
$$
 \R \big/\R (X-1)\cong R[x]\big/(R[X]\cdot(X-1))\cong R.
$$
Thus, $\pi$ is non-zero in $\R \big/\R (X-1)$. We conclude that
$\pi$ is a nilpotent element of $\R $ which is not contained in
the nilpotent ideal $\R (X-1)$.

Combining the above, we see that $\R $ is a local ring; 
and both $\R \pi$ and $\R (X-1)$ are nilpotent ideals, 
but non of them contains the other one. 
In conclusion, $\R $ is not a chain ring. 
\end{proof}

Combining Theorem \ref{cyclic is p.i.r.} with Lemma \ref{R=R_1+...},
we answer the question at the beginning of this section for the 
isometrically cyclic case immediately.

\begin{theorem}
Let $R=R_1\oplus\cdots\oplus R_m$ be a finite PIR, where
$R_i$ for $i=1,\cdots,m$, are finite chain rings 
with radical $J(R_i)$ of nilpotency index $\ell_i$. 
Let $\lambda\in R^\times$ such that 
$\gcd\big(n,{\rm ord}_{R^\times}(\lambda)\big)=1$. Then
$R[X]\big/\langle X^n-\lambda \rangle$ is a PIR if and only if
 $\min\{\ell_i,\gcd(n,{\rm char }R_i)\}=1$ for $i=1,\cdots,m$.
\end{theorem}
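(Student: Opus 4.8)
The plan is to reduce the statement directly to the two results already in hand for the local/chain case, namely Lemma~\ref{R=R_1+...} and Theorem~\ref{cyclic is p.i.r.}. First I would write $\lambda=\lambda_1+\cdots+\lambda_m$ with $\lambda_i\in R_i^\times$ the $R_i$-component of $\lambda$, and apply Lemma~\ref{R=R_1+...} to obtain the ring decomposition
$$
R[X]\big/\langle X^n-\lambda\rangle
=R_1[X]\big/\langle X^n-\lambda_1\rangle\oplus\cdots\oplus R_m[X]\big/\langle X^n-\lambda_m\rangle,
$$
together with the fact supplied by that lemma that the ring on the left is a PIR if and only if every summand $R_i[X]\big/\langle X^n-\lambda_i\rangle$ is a PIR. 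So it suffices to decide for each $i$ whether $R_i[X]\big/\langle X^n-\lambda_i\rangle$ is a PIR.

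Next I would check that the coprimality hypothesis passes down to each component. Since $R^\times=R_1^\times\times\cdots\times R_m^\times$, the order of $\lambda=(\lambda_1,\cdots,\lambda_m)$ in $R^\times$ equals ${\rm lcm}\big({\rm ord}_{R_1^\times}(\lambda_1),\cdots,{\rm ord}_{R_m^\times}(\lambda_m)\big)$; in particular each ${\rm ord}_{R_i^\times}(\lambda_i)$ divides ${\rm ord}_{R^\times}(\lambda)$, and therefore $\gcd\big(n,{\rm ord}_{R_i^\times}(\lambda_i)\big)=1$ for every $i$. This little observation about orders in a product of unit groups is the only verification that does not already appear verbatim in the cited results, and it is also the step most easily skipped over; but it is entirely routine and is the whole of the ``new'' content.

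With this in place, Theorem~\ref{cyclic is p.i.r.} applies to each finite chain ring $R_i$ (with $J(R_i)$ of nilpotency index $\ell_i$) and the unit $\lambda_i$ satisfying $\gcd\big(n,{\rm ord}_{R_i^\times}(\lambda_i)\big)=1$: it gives that $R_i[X]\big/\langle X^n-\lambda_i\rangle$ is a PIR if and only if $\min\{\ell_i,\gcd(n,{\rm char}\,R_i)\}=1$. Combining this equivalence for all $i$ with the decomposition of the first paragraph, $R[X]\big/\langle X^n-\lambda\rangle$ is a PIR if and only if $\min\{\ell_i,\gcd(n,{\rm char}\,R_i)\}=1$ for $i=1,\cdots,m$, which is exactly the asserted characterization. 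I do not expect any genuine obstacle: the proof is simply the conjunction of Lemma~\ref{R=R_1+...} and Theorem~\ref{cyclic is p.i.r.}, glued by the remark that orders of units in a direct sum are least common multiples of the component orders.
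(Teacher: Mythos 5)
Your proposal is correct and is essentially the paper's own argument: the paper proves this theorem simply by combining Lemma~\ref{R=R_1+...} with Theorem~\ref{cyclic is p.i.r.}, exactly as you do. Your explicit remark that ${\rm ord}_{R_i^\times}(\lambda_i)$ divides ${\rm ord}_{R^\times}(\lambda)$, so the coprimality hypothesis passes to each component, is a routine verification the paper leaves implicit, and it is a reasonable detail to spell out.
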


However, the necessity of  Theorem \ref{cyclic is p.i.r.} 
no longer holds for the case that 
$\gcd\big(n,{\rm ord}_{R^\times}(\lambda)\big)\ne 1$. 

\begin{example}\label{e galois}
Let $R={\rm GR}(p^r,s)$ to be the Galois ring of invariants $p,r,s$.
Take $\lambda=1+p\in R^\times$. Then $R[X]/\langle X^p-\lambda\rangle$ 
is finite chain ring of invariants $p,r,s,p,ps$.
\end{example}

\begin{proof} 
Denote $F=\ol R=R/Rp={\rm GF}(p^r)$, 
and denote $\R =R[X]/\langle X^p-\lambda\rangle$. 
Let $\pi=X-\lambda\in\R$. Then $\ol \lambda=1$ and
\begin{align*}
\ol\R:=\R/\R p=F[X]/\langle X^p-1\rangle,\quad
\ol\pi=X-1
\end{align*}
In $\ol\R$, $\ol\pi^p=X^p-1^p=0$. 
So $\pi\in\R p$, hence $\pi$ is a nilpotent element of $\R$. 
It is enough to prove that $\pi^p=pu$ for a unit $u\in\R^\times$. 
Because: from $p=u^{-1}\pi\in \R\pi$ we can get that 
\begin{align*}
 \R/\langle\pi\rangle
  &=\R/\langle p,\pi \rangle
   =R[X]/\langle p, X-\lambda, X^n-\lambda\rangle\\
  &=F[X]/\langle X^n-1, X-1 \rangle=F[X]/\langle X-1 \rangle=F;
\end{align*}
by Lemma \ref{being chain} and cf. Eq.\eqref{Eisenstein equation}, 
$\R$ is a finite chain ring of invariants $p,r,s,p, ps$. 

If $p=2$, then $\lambda=3$ and (note that $X^2=\lambda=3$)
$$ \pi^p= (X-3)^2=X^2-6X+9=3-6X+9=2(-3-3(X-3)); $$
since $3(X-3)=3\pi$ is a nilpotent element and $-3\in \R^\times$,
$u=-3-3(X-3)$ is a unit. We are done for the case that $p=2$.

In he following we assume that $p$ is odd.
For $0<j<p$, the binomial coefficient $\binom{p}{j}$ are divided by $p$; 
so we set $\binom{p}{i}/p=b_i$. Then
\begin{align}\label{pi^p=}
\pi^p=(X-\lambda)^p=\big((X-1)-p\big)^p
=(X-1)^p+p^2h(X),
\end{align}
where $h(X)=(-1)^p p^{p-2}+\sum_{i=1}^{p-1}(-1)^i b_i p^{i-1}(X-1)^{p-i}$.
Note that $X^p=1+p$ in~$\R$ and~$p$ is odd,
\begin{align}\label{(X-1)^p=}\textstyle
(X-1)^p=X^p+(-1)^p+p f(X)=p\big(1+f(X)\big), 
\end{align}
where $f(X)=\sum_{i=1}^{p-1}(-1)^i b_i X^{p-i}$. 
By Remainder Theorem, there is a $q(X)\in R[X]$ such that 
\begin{align}\label{f(X)=}
 f(X)=(X-\lambda)q(X)+f(\lambda)=\pi q(X)+f(\lambda).
\end{align} 
Since for $0<i<p$,
$(1+p)^{p-i}=1+pd_i$ where $d_i=\sum_{j=1}^{p-i}\binom{p-i}{j}p^{j-1}$,
we obtain
\begin{align*}\textstyle
f(\lambda)=\sum\limits_{i=1}^{p-1}(-1)^i b_i (1+p)^{p-i}
=pd+\sum\limits_{i=1}^{p-1}(-1)^ib_i
\end{align*}
where $d=\sum_{i=1}^{p-1}(-1)^i b_id_i$. 
Since $0=(1-1)^p=1-1+p\sum_{i=1}^{p-1}(-1)^ib_i$,
we see that $\sum_{i=1}^{p-1}(-1)^ib_i=0$.
By Eq.\eqref{f(X)=}, in $\R$ we get that $f(X)=\pi q(X)+pd$.
Combining it with Eq.\eqref{(X-1)^p=} and Eq.\eqref{pi^p=}, we obtain
\begin{align*}
\pi^p=p(1+\pi q(X)+pd) + p^2h(X)=p\big(1+\pi q(X)+pd+ph(X) \big).
\end{align*}
In $\R$, $\pi q(X)$, $pd$ and $ph(X)$ are all nilpotent, 
then so is the sum of them.
Thus $u=1+\pi q(X)+pd+ph(X)$ is a unit, and $\pi^p=pu$.
We are done for the example.
\end{proof}

\begin{remark}\rm
In Example \ref{e galois}, $R$ is a Galois ring, 
${\rm char} R=p^s$, $n=p$, $\lambda=1+p$. We take $s\ge 2$. 
Then $\ell=s>1$, $\gcd(n,{\rm char}R)=p>1$. 
But $R[X]/\langle X^p-\lambda\rangle$ is still a finite chain ring.
On the other hand,
\begin{itemize}
\item 
if $p$ is odd then ${\rm ord}_{R^\times}(\lambda)=p^{s-1}$;
\item
otherwise, $p=2$, 
 \begin{itemize}
  \item if $s=2$ then ${\rm ord}_{R^\times}(\lambda)=2$ 
   (it is just \cite[Example 6.4]{DL});
  \item if $s\ge 3$ 
   then ${\rm ord}_{R^\times}(\lambda)=2^{s-2}$.
 \end{itemize} 
\end{itemize}
Thus, $\gcd\big(n,{\rm ord}_{R^\times}(\lambda)\big)=p\ne 1$. 
The necessity of Theorem \ref{cyclic is p.i.r.} no longer holds. 

In conclusion, the question at the beginning of this section is answered for
isometrically cyclic codes over finite PIRs. 
However, it is still open for general case.
\end{remark}

%\section*{Acknowledgements}

\end{document}